\documentclass[12pt, a4paper]{article}

\usepackage{amsfonts}
\usepackage{amsmath}
\usepackage{amssymb}
\usepackage[round]{natbib}
\bibliographystyle{dcu}
\usepackage[dvips]{epsfig}
\usepackage{dcolumn}
\usepackage{enumerate}
\usepackage{hhline}
\usepackage{dsfont}
\usepackage{textcomp}
\usepackage{lineno}
\usepackage[nolists, nomarkers]{endfloat}
\usepackage{graphicx}
\usepackage{color}
\usepackage[usenames,dvipsnames]{xcolor}
\usepackage{accents}
\usepackage{verbatim}

\usepackage{rotating}
\usepackage{hyperref}
\usepackage{multirow}
\usepackage{shortvrb}
\usepackage{xr}
\usepackage{rotating}
\usepackage{paralist}
\usepackage{nicefrac}
\usepackage{floatrow}
\usepackage{float}
\usepackage[percent]{overpic}
\floatsetup[figure]{capposition=bottom}
\floatsetup[overpic]{capposition=bottom}
\usepackage{subfig}
\usepackage{caption}
\usepackage[font=small]{caption}
\captionsetup{font=small}

\usepackage{scalerel,stackengine}

\usepackage[ruled,vlined]{algorithm2e}
\SetAlCapNameFnt{\small}
\SetAlCapFnt{\small}
\SetAlFnt{\small}

\usepackage{algorithmic}
\algsetup{linenosize=\tiny}
\stackMath
\newcommand\reallywidehat[1]{%
\savestack{\tmpbox}{\stretchto{%
  \scaleto{%
    \scalerel*[\widthof{\ensuremath{#1}}]{\kern-.6pt\bigwedge\kern-.6pt}%
    {\rule[-\textheight/2]{1ex}{\textheight}}
  }{\textheight}%
}{0.5ex}}%
\stackon[1pt]{#1}{\tmpbox}%
}

\usepackage{graphicx}

\usepackage{environ}
\makeatletter
\NewEnviron{Lalign}{\tagsleft@true\begin{flalign}\hspace{3em}\BODY\end{flalign}}
\makeatother

\usepackage{titlesec}

\usepackage[margin=1in]{geometry}
\titlespacing*\section{0pt}{0pt plus 4pt minus 2pt}{0pt plus 2pt minus 2pt}
\titlespacing*\subsection{0pt}{0pt plus 4pt minus 2pt}{0pt plus 2pt minus 2pt}
\titlespacing*\subsubsection{0pt}{0pt plus 4pt minus 2pt}{0pt plus 2pt minus 2pt}

\usepackage{paralist}

\renewenvironment{itemize}[1]{\begin{compactitem}#1}{\end{compactitem}}
\renewenvironment{enumerate}[1]{\begin{compactenum}#1}{\end{compactenum}}

\newcounter{myremark}

\newcommand{\NK}[1]{#1}

 \sloppy
 \setlength{\parindent}{0cm}
 \setlength{\parskip}{0.2em}

 \setlength{\paperheight}{29.7cm}
 \setlength{\paperwidth}{20.5cm}

 \setlength{\textheight}{24.7cm}
 \setlength{\textwidth}{6.4in}

 \setlength{\headheight}{0cm}
 \setlength{\headsep}{0cm}
 \setlength{\topskip}{0cm}

 \setlength{\topmargin}{0cm}
 \setlength{\oddsidemargin}{0cm}


 \MakeShortVerb{\°}




\def \dsP {\text{$\mathds{P}$}}
\def \dsE {\text{$\mathds{E}$}}
\def \dsR {\text{$\mathds{R}$}}

\newcommand{\rZ}{Z}
\newcommand{\rY}{Y}
\newcommand{\rX}{\mX}

\newcommand{\ry}{y}
\newcommand{\rx}{\xvec}





\newcommand{\pZ}{F_\rZ}

\newcommand{\dZ}{f_\rZ}



\newcommand{\h}{h}

\newcommand{\basisy}{\avec}

\newcommand{\basisx}{\bvec}
\newcommand{\basisyx}{\cvec}

\newcommand{\parm}{\varthetavec}

\newcommand{\dimparm}{P}

\newcommand{\shiftparm}{\betavec}

\newcommand{\ie}{\textit{i.e.}~}
\newcommand{\eg}{\textit{e.g.}~}

\newcommand{\Prob}{\mathbb{P}}

\newcommand{\RR}{\mathbb{R}}

\usepackage{dsfont}

\newcommand{\ubar}[1]{\underaccent{\bar}{#1}}


 \DeclareMathOperator{\diag}{diag}

 \DeclareMathOperator*{\argmax}{{arg\,max}}


 \DeclareMathOperator{\ND}{N}
 \DeclareMathOperator{\UD}{U}


 \def \calC {\mathcal C}

 \def \calI {\mathcal I}


\def \avec {\text{\boldmath$a$}}    
\def \bvec {\text{\boldmath$b$}}    
\def \cvec {\text{\boldmath$c$}}    
    \def \mD {\text{\boldmath$D$}}

    \def \mI {\text{\boldmath$I$}}

    \def \mP {\text{\boldmath$P$}}
    
    \def \mR {\text{\boldmath$R$}}
    \def \mS {\text{\boldmath$S$}}

\def \xvec {\text{\boldmath$x$}}    \def \mX {\text{\boldmath$X$}}
\def \yvec {\text{\boldmath$y$}}    \def \mY {\text{\boldmath$Y$}}
\def \zvec {\text{\boldmath$z$}}    \def \mZ {\text{\boldmath$Z$}}

\def \ztildevec {\text{\boldmath$\tilde z$}}    \def \mtildeZ {\text{\boldmath$\tilde Z$}}

\def \betavec         {\text{\boldmath$\beta$}}
\def \gammavec        {\text{\boldmath$\gamma$}}

\def \thetavec        {\text{\boldmath$\theta$}}
\def \varthetavec     {\text{\boldmath$\vartheta$}}

\def \lambdavec       {\text{\boldmath$\lambda$}}
\def \muvec           {\text{\boldmath$\mu$}}

\def \mTheta   {\mathbf{\Theta}}
\def \mLambda  {\mathbf{\Lambda}}

\def \mSigma   {\mathbf{\Sigma}}

\def \nullvec {\mathbf{0}}

%
%
%

\usepackage{color}
\usepackage{colordvi}
\fboxsep.3cm
\newlength{\breite}
\breite\textwidth
\addtolength{\breite}{-21.78842pt}

\newcounter{aufg}[section]
  {\refstepcounter{aufg}\noindent\textbf{Exercise \arabic{aufg}:}
   \\*[1ex]\noindent}{\vspace{.5cm}}
   
 \newcounter{notes}[section]
  {\refstepcounter{aufg}\noindent\textbf{}
   \\*[1ex]\noindent}{\vspace{.5cm}}
   
\usepackage{amsthm}  
\newtheorem{defin}{Definition} 

\newtheorem{cor}{Corollary}




\theoremstyle{definition}
\newtheorem{remark}[]{Remark}

\newtheorem*{beisp*}{Example}
\newtheorem{Proof}{Proof}
\makeatletter

\makeatother

\newtheoremstyle{break}
  {}
  {}
  {}
  {}
  {\bfseries}
  {.}
  {\newline}
  {}
  
\theoremstyle{break}



\newcommand{\head}[2]%
 {\hrule \vspace{.15cm} {\sfbold Advanced Statistical Inference, Summer Term 2012, Georg-August-University G\"ottingen}\hfill
{\sfbold Sheet #1}\\
{\sfbold Prof. Dr. Thomas Kneib, Nadja Klein}\hfill {\sfbold #2}

\vspace{.2cm}
\hrule

\vspace{1cm}

}


\newcounter{auf}
{\refstepcounter{auf}
\begin{center}
\fcolorbox[gray]{0}{.95}{
\makebox[\breite]{
\textbf{Exercise \arabic{auf}}
}}\\*[1ex]\noindent
\end{center}
}{\vspace{.5cm}}


\newcounter{loes}[section]
{\stepcounter{loes}
\begin{center}
\fcolorbox[gray]{0}{.95}{
\makebox[\breite]{
\textbf{L"osung \arabic{loes}}
}}\\*[1ex]\noindent
\end{center}
}{}


%
{\begin{center}
\fcolorbox[gray]{0}{.95}{
\makebox[\breite]{
\textbf{Zu Aufgabe #1}
}}\\*[1ex]\noindent
\end{center}\vspace{1cm}
}{\vspace{1cm}}



\newcounter{ka}
{\refstepcounter{ka}
\begin{center}
\framebox[\textwidth]{
\textbf{Aufgabe \arabic{ka}} \hfill #1 Punkte
}\\*[1ex]\noindent
\end{center}
}{\vspace{1cm}}

\newcounter{lka}
{\refstepcounter{lka}
\begin{center}
\framebox[\textwidth]{
\textbf{L\"osung \arabic{lka}} \hfill #1 Punkte
}\\*[1ex]\noindent
\end{center}
}{\vspace{1cm}}


\begin{document}
\setdefaultleftmargin{3.5mm}{3mm}{3mm}{3mm}{3mm}{3mm}
\title{Multivariate Conditional Transformation Models}
\author{Nadja Klein$^{1\mbox{}^\star}$, Torsten Hothorn$^2$, Luisa Barbanti$^2$ and Thomas Kneib$^3$\\
 \normalsize $^1$Humboldt-Universit\"at zu Berlin, $^2$Universit\"at Z\"urich, \\\vspace{-0.5em} \normalsize$^3$Georg-August-Universit\"{a}t G\"{o}ttingen
 }
\date{}
\maketitle

\begin{abstract}\footnotesize
\noindent Regression models describing the joint distribution of
multivariate response variables conditional on covariate information have
become an important aspect of contemporary regression analysis.  However, a
limitation of such models is that they often rely on rather simplistic
assumptions, \eg~a constant dependency structure that is not allowed to
vary with the covariates or the restriction to linear dependence between the
responses only.  We propose a general framework for multivariate conditional
transformation models that overcomes these limitations and describes the
entire distribution in a tractable and interpretable yet flexible way
conditional on nonlinear effects of covariates.  The framework can be
embedded into likelihood-based inference, including results on asymptotic
normality, and allows the dependence structure to vary with covariates.
In addition, the framework scales well beyond bivariate response situations,
which were the main focus of most earlier investigations.  We illustrate the
application of multivariate conditional transformation models in a
trivariate analysis of childhood undernutrition and demonstrate empirically
that our approach can be beneficial compared to existing benchmarks such that
complex truly multivariate data-generating processes can be inferred from
observations.
\end{abstract}

\textit{Key words: Constrained optimization; copula; marginal distributions; multivariate regression; most likely transformations; normalizing flows; seemingly unrelated regression.}

\vfill
\noindent
{\small $\mbox{}^\star$  Correspondence should be directed to~Prof.~Dr.~Nadja Klein at Humboldt Universit\"at zu Berlin,
Unter den Linden 6, 10099 Berlin. Email: nadja.klein@hu-berlin.de.}

\section{Introduction}

\NK{In a broad sense, regression models describe the distribution of a
response conditional on a set of covariates.  Such models are a versatile
tool to understand how changes in the covariates propagate to changes in the
distribution of the response.  Distributional and multivariate regression
models have received much interest during the last decade.  Rather than
focusing on the conditional mean, distributional regression strives to
describe relevant features of the complete conditional distribution of a
usually univariate response by flexible functions of the covariates.
Multivariate regression models employ covariates to express the joint
conditional distribution of a multivariate response.  Known for half a
century, transformation models have recently received renewed interest in
statistics as an important technique for distributional regression and,
under the term normalizing flows, in machine
learning~\citep{papamakarios2019normalizing} for modelling high-dimensional
responses in an unconditional way.  The core idea of flows or
transformation models is to apply a data-driven transformation to the
response such that the transformed variable is standard normal or follows
some other convenient distribution.  In this paper, we propose a framework
of multivariate conditional transformation models (MCTMs) that apply this
principle to define a novel class of multivariate distributional regression
models.  We review relevant developments in multivariate
distributional regression first before highlighting some special features of
the new method.}

The most prominent multivariate regression model is
seemingly unrelated regression (SUR), which uses a vector of
correlated normal error terms to combine several linear model regression
specifications with a common correlation structure that does not depend on
any of the covariates~\citep{Zel1962}. By construction, the model is
restricted to capture linear dependencies.  \citet{LanAdeFahSte2003} extended
SUR models by replacing the frequently used linear predictor with a
structured additive predictor, while retaining the assumption that the
linear correlation structure does not depend on the covariates.
Multivariate probit models use a latent SUR model for a multivariate set of
latent utilities that, via thresholds, are transformed to the observed
binary response vector ~\citep{Hec1978}.  The approach of
\citet{KleKneKlaLan2015} embeds bivariate SUR-type specifications into
generalised additive models for location, scale and shape \citep[GAMLSS,
][]{RigSta2005} by allowing all distribution parameters, including the
correlations, to be related to additive predictors.

Beyond SUR-type models, copulas provide a flexible approach to the
construction of multivariate distributions and regression models.  As a
major advantage, the model-building process is conveniently decomposed into
the specification of the marginals and the selection of an appropriate
copula function that defines the dependence structure \citep[see][for
reviews on copula models and their properties]{Joe1997, Nel2006}.

{There is a rich literature on conditional copula modelling.  To name just a few, \citet{VerOmeGij2011} use kernel methods to estimate the copula
parameter after having determined the marginal distributions  empirically.
Assuming that the marginal distributions are known, a  copula can be fitted
based on a local likelihood using the approach of~\citet{AcaCraYao2011}.
Bayesian inference in bivariate conditional copula models with homoscedastic
Gaussian marginals has been proposed in \citet{SabWeiCra2014} and \citet{LevCra2018}.}

Analogous to GAMLSS, bivariate copula models with parametric marginal
distributions, one-parameter copulas, as well as joint semiparametric specifications for
the predictors of all parameters of both marginal and copula models
have been developed by~\cite{GM-csda} in a penalised likelihood
framework and by \cite{KleKne2016} using a Bayesian approach.  Following
these lines,~\cite{MarRad2019} recently extended the framework to copula
link-based survival models, while \citet{Sun2019} develop a copula-based
semiparametric regression method for bivariate data under general interval
censoring.  Alternatives to simultaneous estimation are two-step procedures
that first estimate the marginals and then the copula given the marginals
and have been proposed by \eg~\cite{Vatter} and
\citet{VGAMbook} for parametric marginal
distributions and bivariate one-parameter and  copulas.  However, these
approaches are mostly limited to the bivariate case. \cite{VatNag2018} recently proposed a sequential method for conditional pair-copula constructions.

Nonparametric attempts to simultaneously study multivariate response
variables have been reported in the context of multivariate quantiles.
Because no natural ordering exists beyond univariate settings, definitions
of multivariate quantiles are challenging and there has been considerable
debate regarding their desirable properties \citep[see][for an introduction
to the different definitions]{Ser2002}.  For example, one group of
approaches draws on the concept of data depths \citep[see for
example][]{Mos2013}, utilising options for multivariate depth functions
based on distances such as Mahalanobis and Oja depths, weighted distances
or on half-spaces. However, potential quantile crossings need
further investigations to ensure a coherent model for the joint
distribution, because single quantiles only relate to local properties of a
response to covariates. For more information on depth functions and multivariate
quantiles, we refer the reader to~\citet{Cheetal2017} and
\cite{Caretal2016,Caretal2017}.

MCTMs constitute a novel and coherent approach to multivariate regression
analysis \NK{which is in many aspects different to existing approaches in
copula or nonparametric regression.} {In particular, this framework makes
six important contributions, none of which are available simultaneously in
any existing method to regression for multivariate responses:
\begin{enumerate}[i.]
\item MCTMs allow for direct estimation and inference of the entire
multivariate conditional cumulative distribution function (CDF) $F_\mY(\yvec
\mid \rx) = \dsP(\mY\leq\yvec \mid \xvec)$ of a $J$-dimensional response
vector $\mY$ given covariate information $\xvec$ under rather weak
assumptions.  A key feature of MCTMs is that they extend likelihood-based
inference in univariate conditional transformation models
\citep[CTMs,][]{moehotbue2017} to the multivariate situation in a natural
way.

\item {MCTMs can capture nonlinear aspects of covariates on all aspects of
the distribution, \eg marginal moments, marginal and joint quantiles,
dependence structures etc.} As in the case of copulas, a feature of the
model specification process is that joint distributions are constructed by
their decomposition into marginals and the dependence structure. Most existing approaches assume a constant dependence structure not varying
over the covariate space.

\item Model estimation can be performed simultaneously for all model
components, thus avoiding the need for two-step estimators that are commonly
applied in most copula-based approaches.

\item Theoretical results on optimality properties, such as consistency and
asymptotic normality are available, building on the achievements in
univariate CTMs.

\item Unlike multivariate GAMLSS, MCTMs neither require strong parametric
assumptions nor separate the model estimation process into local properties,
as in multivariate quantile regression.

\item The method scales well to situations beyond the bivariate case $J = 2$ and
readily allows for the determination of both the marginal distributions of
subsets of the response vector and the conditional distributions of some
response elements, given the others.  MCTMs are not equivalent to
copulas, however, Gaussian copulas~\citep{Son2000} with
arbitrary marginal distributions are treated as a special case in this
paper. Both the marginal distributions and the correlation parameters
of the copula can depend on covariates when such a copula model is specified
by means of an MCTM.

\end{enumerate}}

The paper is structured as follows: Section~\ref{sec:utm} provides details
on the specification of multivariate transformation models for the
unconditional case of absolutely continuous responses.  Likelihood-based
inference and optimality properties are derived in Section 3, along with an illustration on multivariate density estimation with highly non-Gaussian marginal distributions.
Section~\ref{sec:mctm} considers how multivariate conditional transformation
models may depend on covariates, and the approach is illustrated by a trivariate
analysis of childhood undernutrition indicators.  Section~\ref{sec:sim}
presents simulation-based empirical evidence on the performance of MCTMs, including examples with up to 10 response dimensions.  Finally,
Section~\ref{sec:conc} proposes directions for future research.

\section{Multivariate Transformation Models} \label{sec:utm}

\subsection{Basic Model Setup}

First, \emph{unconditional} transformation models are developed for
the joint multivariate distribution of a $J$-dimensional, absolutely
continuous random vector $\mY=(\rY_1,\ldots,\rY_J)^\top \in \dsR^J$
with density $f_\mY(\yvec)$ and CDF
$F_\mY(\yvec) = \dsP(\mY\leq\yvec)$.  These unconditional
models are then extended to the regression case in Section~\ref{sec:mctm}.

The key component of multivariate transformation models is an unknown,
bijective, strictly monotonically increasing transformation function $\h:
\dsR^J\rightarrow \dsR^J$.  This function maps the vector $\mY$, whose
distribution is unknown and shall be estimated from data, to a set of $J$
independent and identically distributed, absolutely continuous random
variables $\rZ_j \sim \Prob_\rZ, j = 1, \dots, J$ with an a priori defined
distribution $\Prob_\rZ$, such that
\[
 \h(\mY) = (\h_1(\mY), \ldots, \h_J(\mY))^\top \overset{d}{=} (\rZ_1, \ldots, \rZ_J)^\top = \mZ \in \dsR^J.
\]
For an  absolutely continuous distribution $\Prob_\rZ$ with log-concave
density $\dZ$, it can easily be shown that a unique, monotonically increasing
transformation function $\h$ exists for arbitrary, absolutely continuous
distributions of $\mY$ \citep{moehotbue2017}.  Thus, the model class is
effectively limited only by the flexibility of the specific choice of $\h$ in
an actual model.  As a default for $\Prob_\rZ$, we consider the standard
normal distribution, \ie $\rZ_j \sim \ND(0,1)$ with $\Prob_\rZ = \ND(0,1)$
and thus $\dZ = \phi_{0,1}$ and $\pZ = \Phi_{0,1}$. Enforcing independent standard normality of the
transformed response $\h(\mY)$ is computationally attractive and allows the
dependence structure to be described by a Gaussian copula (see
Section~\ref{subsec:gauss}). Alternative choices
for $\Prob_\rZ$ are discussed in Section~\ref{subsec:alt}.


Under this transformation model, the task of estimating the distribution of
$\mY$ simplifies to the task of estimating $\h$.  Because $\h$  is strictly monotonically increasing
in each element, it has a positive definite Jacobian, \ie
\begin{equation}\label{eq:jacobian}
 \left|\frac{\partial \h(\yvec)}{\partial\yvec}\right|>0.
\end{equation}
The density of $\mY$ implied by the transformation model is then
\[
 f_\mY(\yvec) = \left[\prod_{j=1}^J \dZ(h_j(\yvec))\right] \cdot \left|\frac{\partial h(\yvec)}{\partial\yvec}\right|.
\]
This form of an unconditional multivariate density with $\Prob_\rZ = \ND(0, 1)$ is called a normalizing
flow in machine learning~\citep{papamakarios2019normalizing}.
However, in this generality, the model is cumbersome in terms of both interpretation and tractability.  Thus, in the following, we introduce
simplified parameterisations of $\h$ that lead to interpretable models.

\subsection{Models with Recursive Structure}\label{MwRS}

In a first step,  we impose a triangular structure on the transformation
function $\h$ by assuming
\[
 h_j(\yvec) = h_j(y_1,\ldots,y_J) = h_j(y_1,\ldots,y_j)
\]
\ie the $j$th component of the transformation function depends only on the
first $j$ elements of its argument $\yvec$.
Consequently, this model formulation depends inherently on
the ordering of the elements in $\mY$.  Because any multivariate
distribution can be factored into a sequence of conditional
distributions, the triangular structure does not pose a limitation in
the representation of $J$-dimensional continuous distributions, as long as the
transformation functions are appropriately chosen.  Furthermore, the
triangular structure of $\h$ considerably simplifies the determinant of the
Jacobian~\eqref{eq:jacobian}, which reduces to
\[
 \left|\frac{\partial h(\yvec)}{\partial\yvec}\right| =
     \prod_{j=1}^J\left|\frac{\partial h_j(\ry_1,\ldots,y_j)}{\partial \ry_j}\right|.
\]
%
In a second step, we assume that the triangulary structured transformation
functions are linear combinations of marginal transformation functions
$\tilde{\h}_j: \dsR\rightarrow\dsR$, \ie
\[
 \h_j(\ry_1,\ldots, \ry_j) = \lambda_{j1}\tilde{\h}_1(\ry_1) + \ldots + \lambda_{jj}\tilde{\h}_j(\ry_j)
\]
where each $\tilde{\h}_j$ increases strictly monotonically and
$\lambda_{jj} > 0$ for all $j=1,\ldots,J$ to ensure the bijectivity of $\h$. Because the
last coefficient, $\lambda_{jj}$, cannot be separated from the marginal
transformation function $\tilde{\h}_j(\ry_j)$, we use the
restriction $\lambda_{jj} \equiv 1$. Thus, our parameterisation of the
transformation function $\h$ finally reads
 \begin{equation}\label{eq:finalmodel}
 \h_j(\ry_1,\ldots, \ry_j) = \lambda_{j1}\tilde{\h}_1(\ry_1) + \ldots + \lambda_{j,j-1}\tilde{\h}_{j-1}(\ry_{j-1}) + \tilde{\h}_j(\ry_j).
 \end{equation}
Each of the marginal transformation functions $\tilde{\h}_j(\ry_j)$
includes an intercept, such that no additional intercept term can be
inserted in~\eqref{eq:finalmodel}.  The Jacobian of $\h$ now further simplifies
to
 \[
 \left|\frac{\partial \h(\yvec)}{\partial\yvec}\right| =
   \prod_{j=1}^J \frac{\partial \tilde{\h}_j(\ry_j)}{\partial \ry_j} > 0,
 \]
and the model-based density function for $\mY$ is therefore
\[
 f_\mY(\yvec) = \prod_{j=1}^J  \dZ\left(\lambda_{j1}\tilde{\h}_1(\ry_1) + \ldots +
 \lambda_{j,j-1}\tilde{\h}_{j-1}(\ry_{j-1}) + \tilde{\h}_j(\ry_j)\right)  \frac{\partial
 \tilde{\h}_j(\ry_j)}{\partial \ry_j}.
\]
Summarising the model specification, our multivariate transformation model
is characterised by a set of marginal transformations $\tilde{\h}_j(\ry_j)$,
$j=1,\ldots,J$, each applying to only a single component of the vector
$\mY$, and by a lower triangular $(J\times J)$ matrix of transformation
coefficients
 \[
 \mLambda = \begin{pmatrix}
 1 & & & & 0\\
 \lambda_{21} & 1 \\
 \lambda_{31} & \lambda_{32} & 1 \\
 \vdots & \vdots & & \ddots & \\
 \lambda_{J1} & \lambda_{J2} & \ldots & \lambda_{J,J-1} & 1
 \end{pmatrix}.
 \]
Under the standard normal reference distribution $\Prob_\rZ = \ND(0, 1)$,
the coefficients in $\mLambda$ characterise the dependence structure via a
Gaussian copula, while the marginal transformation functions $\tilde{\h}_j$
allow the generation of arbitrary marginal distributions for the components
of $\mY$.  {Furthermore, the entries of $\mLambda$ have the interpretation
of entries in the inverse precision matrix of the correlation matrix between
the marginally transformed components of $\mY$, as we derive in the following.}

\subsection{Relation to Gaussian Copula Models} \label{subsec:gauss}

The relationship between multivariate transformation models and Gaussian copulas
can be made more precise by defining random variables $\tilde{\rZ}_j =
\tilde{\h}_j(Y_j)$.  Under a standard normal reference distribution
$\Prob_\rZ = \ND(0, 1)$, the vector
$\mtildeZ=(\tilde{\rZ}_1,\ldots,\tilde{\rZ}_J)^\top$ follows a zero mean
multivariate normal distribution $\mtildeZ\sim\ND_J(\nullvec_J, \mSigma)$
with covariance matrix $\mSigma = \mLambda^{-1} \mLambda^{-\top}$.  As a consequence, the
elements of $\mtildeZ$ are marginally normally distributed as
$\tilde{Z}_j\sim\ND(0,\sigma_j^2)$, where the variances $\sigma_j^2$ can be
determined from the diagonal elements of $\mSigma$.

For the transformation functions $\tilde{\h}_j$, the explicit representation
\begin{equation}\label{eq:gausscop_trafo}
 \tilde{\h}_j(\rY_j) = \Phi_{0, \sigma_j^2}^{-1}(F_j(\rY_j)) = \tilde{\rZ}_j
\end{equation}
is obtained, where $F_j(\cdot)$ is the univariate marginal CDF of $\rY_j$. In summary,
 \begin{eqnarray*}
 \dsP(\mY\le \yvec) &=& \dsP(\mtildeZ\le\ztildevec)
 = \Phi_{\nullvec, \mSigma}(\ztildevec)
 = \Phi_{\nullvec,
 \mSigma}\left[\Phi_{0,\sigma_1^2}^{-1}\left\{F_1(\ry_1)\right\}, \ldots,
 \Phi_{0,\sigma_J^2}^{-1}\left\{F_J(y_J)\right\}\right]\\
 &=& \Phi_{\nullvec, \mSigma}\left(\tilde{\h}_1(\ry_1), \dots,
 \tilde{\h}_J(\ry_J)\right).
 \end{eqnarray*}
and therefore the CDF of $\mY$ has exactly the
same structure as a Gaussian copula, except that our
representation relies on a different parameterisation of $\mSigma$ through
$\mSigma = \mLambda^{-1} \mLambda^{-\top}$ rather than a covariance matrix with unit diagonal.
This is compensated for by the inclusion of univariate Gaussians with variances
different from those acting on the marginals.  However, because
\begin{equation}\label{eq:eqi}
 \tilde{\rZ}_j/\sigma_j\sim\ND(0,1) \text{ and }
 \Phi_{0,\sigma_j^2}(\tilde{z}_j) = \Phi_{0,1}(\tilde{z}_j / \sigma_j)
\end{equation}
unconditional MCTMs with $\Prob_\rZ = \ND(0, 1)$ are equivalent to a Gaussian copula with flexible marginal distributions.
This is no longer the case of the conditional case of regression considered in
Section~\ref{sec:mctm}, where our approach can capture nonlinear aspects of
covariates on all aspects of the distribution, \eg~marginal moments,
marginal and joint quantiles, dependence structures etc.

\subsection{Some Properties of the Dependence Structure}
We highlight some properties of our MCTM that can be of interest in applied studies.
\begin{itemize}
\item The transformed vector $\tilde\mZ$ is jointly multivariate normally
distributed such that pairwise dependencies are restricted to linear
dependence through linear correlations.  Importantly however,  $\mY$ is allowed to have
a nonlinear dependence structure due to the inverse marginal
transformation functions $Y_j=\tilde h_j^{-1}(\tilde Z_j)$.  We illustrate
this feature in Figure~\ref{fig:illustration} which shows a bivariate
scatterplot with one marginally normally and one marginally gamma
distributed response component for the vector $(Y_1,Y_2)^{\top}$ (on the
left) together with the bivariate normally distributed variables $\tilde\mZ$
(on the right).

\begin{figure}[t]
\begin{center}\includegraphics[width=0.9\textwidth,angle=0]{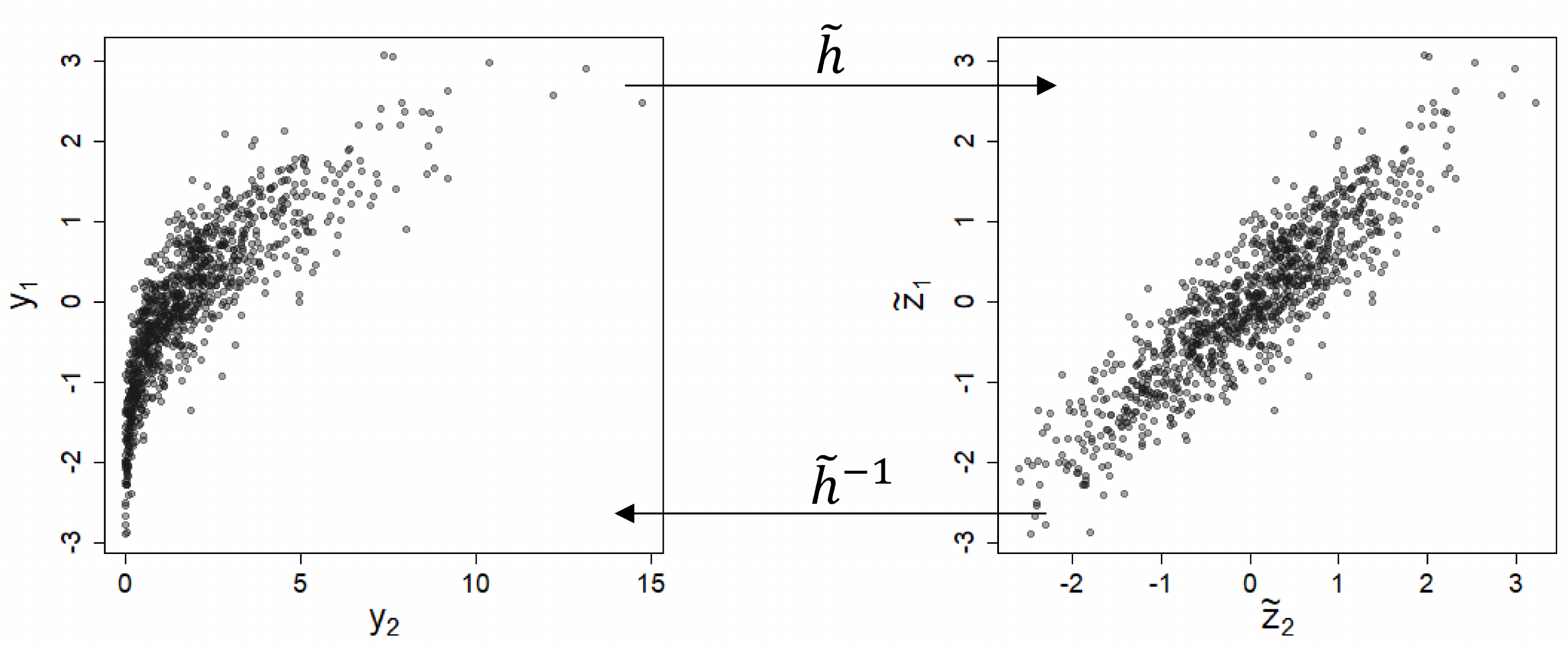}\end{center}
\caption{\footnotesize Bivariate illustration of the relation between the distributions of $\tilde\mZ$ (right) and $\mY$ (left). The distribution of $\mY$ is constructed from a Gaussian copula with correlation parameter $\rho=0.9$, standard normally distributed $Y_1$ and gamma distributed $Y_2$.}
\label{fig:illustration}
\end{figure}

\item Assuming  $\dsP_Z=\ND(0,1)$ as reference distribution, the entries in $\mLambda$ determine the conditional
independence structure between the transformed responses $\tilde{Z}_j$ and
therefore, implicitly, also the observed responses $Y_j$ as it is for a
multivariate Gaussian distribution, see the Appendix~\ref{app:theos:dep} for details.

\item Rather than looking at linear correlations, common measures of dependence in
the context of multivariate modelling are Spearman's rho $\rho^S$, Kendall's
tau $\tau^K$ and lower/upper quantile dependence $\lambda^L/\lambda^U$. These can computed in closed form using the results known for a Gaussian
copula, see again the Appendix~\ref{app:theos:dep} for formulas.  One appealing property
of these measures is that they are invariant with respect to monotonic
transformations of the marginals and we will use the $\rho^S$ later in our trivariate
application on childhood undernutrition.
\end{itemize}

\subsection{Model-Implied Marginal and Conditional Distributions}

The relationship of unconditional transformation models and Gaussian copulas can
now be employed to facilitate the derivation of model-implied marginal and
conditional distributions.  The univariate marginal distributions of elements
$Y_j$ are given by
\[
 F_j(\ry_j) = \dsP(\rY_j \le \ry_j) = \Phi_{\nullvec, \mSigma}\left(\infty, \dots, \infty,
\tilde{\h}_j(\ry_j), \infty, \dots, \infty\right) = \Phi_{0,\sigma_j^2}\left(\tilde{\h}_j(\ry_j)\right)
\]
but more general versions (\ie marginals of a subvector of $\mY$) and conditional distributions are also easily obtained, see the Appendix~\ref{app:marg:cond}.

Finally, using the marginal CDFs and densities, the marginal quantiles or moments can be derived. The latter can be computed by solving simple univariate numerical integrals, for example:
\[
 \dsE(Y_{j}) = \int F_{j}^{-1}(\Phi_{0,\sigma_{{j}}^2}(\tilde z)) \phi_{0,\sigma_{{j}}^2}(\tilde z)\mathrm{d}\tilde z,
\]
for the marginal mean.

\subsection{Alternative Reference Distributions} \label{subsec:alt}

Although we have discussed our model specification in the context of a normal
reference distribution and a Gaussian copula,
these choices can be readily modified.  In particular, if a reference distribution $\Prob_\rZ
\neq \ND(0, 1)$ is chosen, the transformation function has to be modified to
\[
 \h_j(\ry_1,\ldots, \ry_j) =
\sum_{\jmath = 1}^{j - 1} \lambda_{j\jmath}
\Phi^{-1}_{0,\sigma_\jmath^2}\left[\pZ\left\{\tilde{\h}_\jmath(\ry_\jmath)\right\}\right] +
\Phi^{-1}_{0,\sigma_j^2}\left[\pZ\left\{\tilde{\h}_j(\ry_j)\right\}\right].
\]
We therefore obtain $J$ independent random variables
$\Phi^{-1}_{0,\sigma_j^2}\left\{\pZ\left[\tilde{\h}_j(\rY_j)\right]\right\}
\sim \ND(0, \sigma^2_j)$. The model then implies marginal distributions
\[
 F_j(\ry_j) = \Phi_{\nullvec, \mSigma}\left(\infty, \dots, \infty,
\Phi^{-1}_{0,\sigma_j^2}\left\{\pZ\left[\tilde{\h}_j(\ry_j)\right]\right\},
\infty, \dots, \infty\right) = \pZ\left(\tilde{\h}_j(\ry_j)\right).
\]
Attractive alternative choices for the reference distribution are $\pZ^{-1}=\text{logit}$ and $\pZ^{-1}
=\text{cloglog}$, because regression coefficients can be interpreted as
log-odds ratios and log-hazard ratios (in fact, in the latter case, the marginal model is then a
Cox proportional hazards model), respectively.

Extensions beyond the Gaussian copula structure are also conceivable when
the linear combination of marginal transformations is replaced by nonlinear
specifications.  However, those types of models easily lead to identification problems
and do not provide direct links to existing parametric copula classes. Accordingly, we leave this topic for future research.

\section{Transformation Analysis}\label{sec:inf}

This section defines the maximum likelihood estimator and establishes its
consistency and asymptotic normality based on suitable parameterisations of
the marginal transformation functions $\tilde{\h}_j$. It closes with an illustration on bivariate density estimation for highly non-Gaussian data.

\subsection{Parameterisation of the Transformation Functions}\label{subsec:trafo}

Following~\cite{moehotbue2017}, the marginal transformation
functions $\tilde{\h}_j(\ry_j)$ are parameterised as linear combinations of the
basis-transformed argument $\ry_j$, such that $\tilde{\h}_j(\ry_j) =
\basisy_j(\ry_j)^\top \parm_j$ is monotonically increasing.  The $\dimparm_j$-dimensional basis functions $\basisy_j: \dsR
\rightarrow \dsR^{\dimparm_j}$ with basis coefficients $\parm_j$ and corresponding derivative
$\tilde{\h}^\prime_j(\ry_j) = \basisy^\prime_j(\ry_j)^\top \parm_j > 0$ are
problem-specific, see \cite{moehotbue2017} for suitable choices in different
applications. Because marginal transformation functions $\tilde{\h}_j(\ry_j)$ and therefore also the
plug-in estimators $\hat F_j$ of the marginal CDF
should be smooth with respect to $y_j$, in principle any polynomial or
spline-based basis is a suitable choice for $\basisy_j$.  The empirical results of
Sections~\ref{sec:mctm} and~\ref{sec:sim} rely on Bernstein polynomials of order
$M$; suitable choices of this parameter are discussed in Section~\ref{sim1}.
The basis functions $\basisy_j(\ry_j)$ are then densities of beta
distributions, a choice that is computationally appealing because strict
monotonicity can be formulated as a set of linear constraints on the
components of the parameters $\parm_{j}$, see~\citet{CurGho2011,Far2012} for
details. Furthermore, Bernstein polynomials of sufficiently large order $M$ can uniformly
approximate any function over an interval as a result of the Weierstrass
approximation theorem. \cite{moehotbue2017} investigate the choice of $M$ for univariate CTMs.

\subsection{Inference}\label{subsec:Inference}

In the following, we denote the set of parameters describing all marginal
transformation functions $\tilde{\h}_j, j = 1, \dots, J$ as
$\parm=(\parm_1^\top,\ldots,\parm_J^\top)^\top\in\dsR^{\sum_{j=1}^J P_j}$,
while $\lambdavec$ contains all unknown elements of $\mLambda$, such that
$\thetavec=(\parm^\top,\lambdavec^\top)^\top$ comprises all unknown model
parameters.  The parameter space is denoted as $\Theta=\lbrace\thetavec|
h\in\mathcal{H}\rbrace$, where
\begin{equation*}
\begin{aligned}
\mathcal{H}=\bigg\lbrace h:\dsR^J&\to\dsR^J \mid \h \, \text{as in (\ref{eq:finalmodel}), }\h\mbox{ strictly monotonically increasing} \bigg\rbrace
\end{aligned}
\end{equation*}
is the space of all strictly monotonic triangular transformation functions.
Consequently, the problem of estimating the unknown transformation function
$\h$, and thus the unknown distribution function $F_\mY$, reduces to the
problem of estimating the parameter vector $\thetavec$. With the
construction of multivariate transformation models, this is conveniently achieved using
likelihood-based inference.
For $\Prob_\rZ = \ND(0, 1)$, the log-likelihood contribution of a given datum
$\yvec_i = (\ry_{i1}, \dots, \ry_{iJ})^\top \in \dsR^J$, $i=1,\ldots,n$ is
\[
\ell_i(\thetavec) =
-\frac{1}{2}\sum_{j=1}^J \left(\sum_{\jmath = 1}^{j - 1}
  \lambda_{j\jmath} \basisy_\jmath(\ry_{i\jmath})^\top \parm_\jmath +
  \basisy_j(\ry_{ij})^\top \parm_j \right)^2
+ \log\left(\basisy^\prime_j(\ry_{ij})^\top \parm_j\right)
\]
with corresponding score contributions
\begin{eqnarray}\label{s1}
\frac{\partial \ell_i(\thetavec)}{\partial \parm_{k}} & = &
\sum_{j = k}^J - \left(\sum_{\jmath = 1}^{j - 1}
  \lambda_{j\jmath} \basisy_\jmath(\ry_{i\jmath})^\top \parm_\jmath +
  \basisy_j(\ry_{ij})^\top \parm_j \right) \lambda_{jk}
\basisy_k(\ry_{ik})
  + \frac{\basisy^\prime_k(\ry_{ik})}{\basisy^\prime_k(\ry_{ik})^\top \parm_{k}}
  \\\label{s2}
\frac{\partial \ell_i(\thetavec)}{\partial \lambda_{\tilde k k}} & = &
- \left(\sum_{\jmath = 1}^{\tilde k - 1}
  \lambda_{\tilde k\jmath} \basisy_\jmath(\ry_{i\jmath})^\top \parm_\jmath +
  \basisy_{\tilde k}(\ry_{i\tilde k})^\top \parm_{\tilde k} \right) \basisy_k(\ry_{ik})^\top \parm_k
\end{eqnarray}
for $k = 1, \dots, J$, $1 \le k < \tilde k\le J$ (and zero otherwise) and
with $\lambda_{jj} \equiv 1$.  We furthermore define
$\mathcal{F}_i(\thetavec)=-\frac{\partial^2
\ell_i(\thetavec)}{\partial\thetavec\partial\thetavec^\top}$ as the $i$th
contribution to the observed Fisher information.  Explicit expressions for
the entries are given in Appendix~\ref{app:fisher}.  Despite the
estimation of a fairly complex multivariate distribution with
a Gaussian copula dependence structure and arbitrary marginals, the
log-likelihood contributions have a very simple form.  In addition,
the log-concavity of $f_Z$ ensures the concavity of the log-likelihood and thus
the existence and uniqueness of the estimated transformation
function $\hat \h$.
\begin{defin} (Maximum likelihood estimator.) The maximum likelihood estimator (MLE) for the parameters of a multivariate transformation model  is given by
\begin{equation}\label{eq:MLE}
\begin{aligned}
\hat\thetavec_n &=\argmax_{\thetavec\in\mTheta}\sum_{i=1}^n \ell_i(\thetavec).
\end{aligned}
\end{equation}
\end{defin}
Based on the maximum likelihood estimator $\hat\thetavec_n$,
maximum likelihood estimators for the marginal and joint CDFs are also obtained,
by plugging in $\hat\thetavec_n$.  Specifically, the estimated marginal CDFs are given by $\hat
F_j(y_j) = \Phi_{0,\hat\sigma_j^2}(\basisy_j(\ry_j)^\top \hat{\parm}_j)$, where
$\hat\sigma_j^2$ is the $j$th diagonal entry of $\hat{\mSigma}$.
The estimated joint CDF reads
\[
 \hat F_\mY(\yvec)=\Phi_{\nullvec,\hat\mSigma}\left(
    \basisy_1(\ry_1)^\top \hat{\parm}_1, \dots,
    \basisy_J(\ry_J)^\top \hat{\parm}_J\right).
\]

\subsection{Parametric Inference}\label{subsec:param:Inference}

In this section, we discuss likelihood-based inference and establish
asymptotic results for multivariate transformation models based the
theoretical results derived in \cite{moehotbue2017} for univariate
conditional transformation models.  Assume
$\mY_1,\ldots,\mY_n\overset{\mbox{\scriptsize{i.i.d.}}}{\sim}
F_{\mY,\thetavec_0}$ where $\thetavec_0$ denotes the true parameter vector,
then the following assumptions are made:
\begin{itemize}[(A1)]
\item The parameter space $\mTheta$ is compact.
\item[(A2)]  $\dsE_{\thetavec_0}\lbrack \sup_{\thetavec\in\mTheta}[\log\{f_\mY(\mY|\thetavec)\}]\rbrack<\infty$, where $$\sup_{\lbrace
\thetavec|\,|\thetavec-\thetavec_0|>\epsilon\rbrace}\dsE_{\thetavec_0}[\log\{f_\mY(\mY|\thetavec)\}]<\dsE_{\thetavec_0}[\log\{f_\mY(\mY|\thetavec_0)\}].$$
\item[(A3)]
\[
\dsE_{\thetavec_0}\left(\sup_{\thetavec}\left|\left|\frac{\partial\log(f_\mY(\mY|\thetavec))}{\partial\thetavec}\right|\right|^2\right)<\infty.
\]
\item[(A4)]$\dsE_{\thetavec_0}(\mathcal F(\thetavec))$ is nonsingular.
\item[(A5)] $0<f_Z<\infty$, $\sup|f^\prime_Z|<\infty$, $\sup|f_Z^{\prime
\prime}|<\infty$.
\end{itemize}

{\begin{remark}
Assumption (A1) is made for convenience, and relaxations of such a condition are given in~\cite{vv1998}. The assumptions in (A2) are rather
weak: the first one holds if the functions $\basisy$ are not arbitrarily ill-posed, and the second one
holds if the function $\dsE_{\thetavec_0}\lbrack \sup_{\thetavec\in\mTheta}[\log\{f_\mY(\mY|\thetavec)\}]\rbrack$ is strictly convex in $\thetavec$ (if the assumption would not
hold, we would still have convergence to the set $\dsE_{\thetavec_0}\lbrack \sup_{\thetavec\in\mTheta}[\log\{f_\mY(\mY|\thetavec)\}]\rbrack$). Assumptions (A3)--(A5) are needed to derive the asymptotic distribution.
\end{remark}}
\begin{cor}\label{cor1}
Assuming (A1)--(A2), the sequence of estimators $\hat\thetavec_n$ converges in probability $\hat\thetavec_n\overset{\dsP}{\to}\thetavec_0$ for $n\to\infty$.
\end{cor}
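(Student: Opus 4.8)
The plan is to recognise Corollary~\ref{cor1} as an instance of the classical consistency theorem for M-estimators (the argmax estimator), so that the work consists of checking that the hypotheses of that theorem are met under (A1)--(A2). Concretely, I would invoke the standard result (see e.g.~\cite{vv1998}, Theorem~5.7): if the parameter space $\mTheta$ is compact, if the map $\thetavec \mapsto M(\thetavec) := \dsE_{\thetavec_0}[\log f_\mY(\mY \mid \thetavec)]$ is continuous with a well-separated maximum at $\thetavec_0$, and if the sample criterion $M_n(\thetavec) := n^{-1}\sum_{i=1}^n \ell_i(\thetavec)$ converges to $M(\thetavec)$ uniformly over $\mTheta$ in probability, then $\hat\thetavec_n \overset{\dsP}{\to} \thetavec_0$.

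The steps, in order, are as follows. First, I would note that $\hat\thetavec_n = \argmax_{\thetavec \in \mTheta} M_n(\thetavec)$ by the definition~\eqref{eq:MLE}, and that $\mTheta$ is compact by (A1). Second, I would identify the population criterion $M(\thetavec)$ and observe that (A2) gives exactly the well-separated maximum condition: $\sup_{\{\thetavec \,:\, |\thetavec - \thetavec_0| > \eps\}} M(\thetavec) < M(\thetavec_0)$ for every $\eps > 0$, and that $M(\thetavec_0)$ is finite (the integrability in the first part of (A2) ensures $M$ is well defined and finite on $\mTheta$). Third, I would establish the uniform law of large numbers $\sup_{\thetavec \in \mTheta} |M_n(\thetavec) - M(\thetavec)| \overset{\dsP}{\to} 0$: here the log-likelihood contribution $\ell_i(\thetavec)$ is a continuous function of $\thetavec$ for each fixed $\yvec_i$ (being a smooth function of the entries of $\mLambda$ and of $\parm$, with the $\log(\basisy_j'(\ry_{ij})^\top\parm_j)$ terms continuous on the region where the derivative is positive, which holds on $\mTheta$ since $h \in \mathcal H$ forces strict monotonicity), and the envelope condition $\dsE_{\thetavec_0}[\sup_{\thetavec \in \mTheta} |\log f_\mY(\mY \mid \thetavec)|] < \infty$ follows from the integrability in (A2) together with compactness (the supremum of a finite family of continuous functions, dominated below likewise using log-concavity of $f_Z$ and boundedness on the compact set); this places us in the setting of the uniform LLN for continuous criterion functions with an integrable envelope. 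Fourth, the three ingredients assembled, the consistency theorem delivers the conclusion. I would also remark, echoing the preceding Remark, that if the separation in (A2) fails one still obtains convergence to the (argmax) set, but under (A2) the limit is the singleton $\{\thetavec_0\}$.

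The main obstacle I anticipate is the uniform convergence step, and within it the verification that a single integrable envelope dominates $\sup_{\thetavec \in \mTheta}|\ell_i(\thetavec)|$. The quadratic term $-\tfrac12 \sum_j (\cdots)^2$ is bounded on $\mTheta$ by a polynomial in the $\basisy_\jmath(\ry_{i\jmath})^\top\parm_\jmath$ with coefficients controlled by the compact range of $\lambdavec$, so its supremum is dominated by a fixed continuous function of $\yvec_i$; the delicate piece is the $\log(\basisy_j'(\ry_{ij})^\top\parm_j)$ term, which can in principle tend to $-\infty$ near the boundary of the monotonicity region --- one needs that on the compact $\mTheta$ the derivative $\basisy_j'(\ry_{ij})^\top\parm_j$ stays bounded away from $0$ uniformly in $\thetavec$ (this is where the phrasing "the functions $\basisy$ are not arbitrarily ill-posed" in the Remark does its work), and that its supremum over $\thetavec$ is $\dsE_{\thetavec_0}$-integrable, which is the content of the first half of (A2). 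Once the envelope is in hand, continuity of $\thetavec \mapsto \ell_i(\thetavec)$ plus compactness of $\mTheta$ gives the uniform LLN by the standard argument (pointwise LLN on a countable dense set, then equicontinuity via the envelope), and the rest is bookkeeping.
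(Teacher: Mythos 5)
Your proposal is correct and follows essentially the same route as the paper, whose entire proof is a one-line appeal to the argmax-consistency theorem for M-estimators in van der Vaart (1998); you simply unpack the hypotheses of that theorem (compactness from (A1), well-separation and integrability from (A2), and the uniform law of large numbers via continuity plus an integrable envelope), which is exactly the verification the paper leaves implicit and touches on only in its surrounding Remark.
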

The proof of Corollary~\ref{cor1} follows from Theorem~5.8 of~\cite{vv1998}.
\begin{cor}\label{cor2}
Assuming (A1)--(A5), the sequence of estimators $\sqrt{n}(\hat\thetavec_n-\thetavec_0)$ is asymptotically normally distributed with covariance matrix
\begin{equation}\label{eq:as:cov}
\left(\dsE_{\thetavec_0}\left(-\frac{\partial^2\log(f_\mY(\mY|\thetavec))}{\partial\thetavec\partial\thetavec^\top}\right)\right)^{-1}.
\end{equation}
\end{cor}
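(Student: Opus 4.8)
The plan is to invoke a standard asymptotic-normality theorem for maximum likelihood estimators --- specifically Theorem~5.39 of \cite{vv1998} --- and then verify that assumptions (A1)--(A5) supply exactly the regularity conditions that theorem requires. The structure of the argument mirrors the consistency proof of Corollary~\ref{cor1}: consistency, already established, gives us that $\hat\thetavec_n$ eventually lies in any neighbourhood of $\thetavec_0$, so the behaviour of the log-likelihood away from $\thetavec_0$ is irrelevant and we may work locally.

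First I would record that, because $\dsP_\rZ = \ND(0,1)$ has a smooth log-concave density and the marginal transformation functions enter the model through finitely many basis functions $\basisy_j$ with derivatives $\basisy_j^\prime$, the map $\thetavec \mapsto \log f_\mY(\yvec \mid \thetavec)$ is twice continuously differentiable on the interior of $\mTheta$ for every fixed $\yvec$; the explicit score expressions~\eqref{s1}--\eqref{s2} and the Fisher-information entries in Appendix~\ref{app:fisher} make this concrete. Next I would establish a local dominated-differentiability (Lipschitz) condition on the score: using (A5) to bound $f_\rZ$ and its first two derivatives and using (A3) to control $\dsE_{\thetavec_0}\|\partial \log f_\mY / \partial\thetavec\|^2$, one shows that $\sqrt{n}$ times the empirical score at $\thetavec_0$ is asymptotically normal with mean zero and covariance equal to the Fisher information $\dsE_{\thetavec_0}(\mathcal{F}(\thetavec_0))$, by the classical argument that $\dsE_{\thetavec_0}[\partial \log f_\mY/\partial\thetavec] = 0$ together with the central limit theorem for i.i.d. summands. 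Then a one-term Taylor expansion of the empirical score around $\thetavec_0$, evaluated at the consistent root $\hat\thetavec_n$, together with a uniform law of large numbers for the Hessian (again guaranteed by (A3) and (A5) furnishing an integrable envelope), yields
\[
\sqrt{n}(\hat\thetavec_n - \thetavec_0) = \left(\dsE_{\thetavec_0}(\mathcal{F}(\thetavec_0))\right)^{-1}\frac{1}{\sqrt{n}}\sum_{i=1}^n \frac{\partial \ell_i(\thetavec_0)}{\partial \thetavec} + o_{\dsP}(1),
\]
where the inverse exists precisely by the nonsingularity assumption (A4). Combining this linearisation with the central limit theorem for the score gives asymptotic normality with covariance $\dsE_{\thetavec_0}(\mathcal{F}(\thetavec_0))^{-1}$, which is~\eqref{eq:as:cov}.

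The main obstacle --- and the only place where the multivariate, triangular structure of our model genuinely intervenes rather than being a black-box invocation of \cite{vv1998} --- is verifying that $\dsE_{\thetavec_0}(\mathcal{F}(\thetavec_0))$ is indeed the asymptotic covariance of the score and simultaneously the limit of the observed information, i.e. checking the information-matrix equality and the uniform convergence of $-n^{-1}\sum_i \partial^2 \ell_i/\partial\thetavec\partial\thetavec^\top$ to $\dsE_{\thetavec_0}(\mathcal{F}(\thetavec_0))$ in a neighbourhood of $\thetavec_0$. Here I would exploit that the parameterisation is smooth in $\thetavec$ with the Hessian entries being polynomials in the (bounded-on-compacta) basis evaluations times factors of the form $f_\rZ, f_\rZ^\prime, f_\rZ^{\prime\prime}$ and $(\basisy_k^\prime(\ry_k)^\top\parm_k)^{-1}$ --- the latter being uniformly bounded above because $\mTheta$ is compact and strict monotonicity is a closed constraint bounded away from $0$ on $\mTheta$ --- so that (A3) and (A5) produce an integrable envelope and the uniform law of large numbers applies; differentiation under the integral sign (legitimate by the same envelope) then gives the information equality. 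With these two facts in hand, the remaining steps are the routine Taylor/Slutsky bookkeeping described above, and the conclusion follows.
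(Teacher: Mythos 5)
Your argument reaches the right conclusion but follows a genuinely different route from the paper. You announce Theorem~5.39 of van der Vaart and then actually carry out the classical Cram\'er-style proof: a one-term Taylor expansion of the empirical score at $\thetavec_0$, a CLT for the score, a uniform law of large numbers for the Hessian, and the information-matrix equality. The paper instead verifies the \emph{hypotheses} of Theorem~5.39 directly: it observes that $\sqrt{f_\mY}$ is continuously differentiable in $\thetavec$ and that the outer-product form of the information is continuous in $\thetavec$ (read off from the explicit scores \eqref{s1}--\eqref{s2}), invokes Lemma~7.6 of van der Vaart to conclude differentiability in quadratic mean, and then applies Theorem~5.39 together with Corollary~\ref{cor1} and (A3)--(A5). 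The payoff of the paper's route is that it never needs second derivatives of the log-likelihood to be uniformly integrable --- DQM plus a locally square-integrable Lipschitz constant for the score suffices --- whereas your route buys a more elementary, self-contained derivation at the cost of stronger regularity.

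That cost is where your proposal has a real gap relative to the stated assumptions. You claim the uniform law of large numbers for $-n^{-1}\sum_i \partial^2\ell_i/\partial\thetavec\partial\thetavec^\top$ and the remainder control in the Taylor expansion are ``guaranteed by (A3) and (A5) furnishing an integrable envelope.'' But (A3) controls only the \emph{score}, and (A5) bounds only $f_\rZ$ and its derivatives; neither delivers $\dsE_{\thetavec_0}\bigl(\sup_{\thetavec}\|\partial^2\log f_\mY/\partial\thetavec\partial\thetavec^\top\|\bigr)<\infty$, which your argument needs and which involves products of basis evaluations (see the entries in Appendix~\ref{app:fisher}) whose integrability is an additional requirement. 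Relatedly, your assertion that $(\basisy_k^\prime(\ry_k)^\top\parm_k)^{-1}$ is uniformly bounded ``because strict monotonicity is a closed constraint bounded away from $0$ on $\mTheta$'' does not follow: strict positivity is an open condition, and compactness of $\mTheta$ alone does not bound this quantity away from zero uniformly in $\ry_k$. The paper's DQM route sidesteps both issues, which is precisely why it is the argument chosen there; to make your version airtight you would either need to add an explicit envelope assumption on the Hessian or retreat to the paper's verification of the Theorem~5.39 hypotheses.
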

\begin{proof}
By further assumption, $\sqrt{f_\mY}$ is continuously differentiable in $\thetavec$ for all $\yvec$ and
\[
\dsE_{\thetavec_0}\left(\left\lbrack\frac{\partial\log(f_\mY(\mY|\thetavec))}{\partial\thetavec}\right\rbrack\left\lbrack\frac{\partial\log(f_\mY(\mY|\thetavec))}{\partial\thetavec}\right\rbrack^\top\right)
\]
is continuous in $\thetavec$, due to \eqref{s1},\eqref{s2}. Thus, $F_{Y,\thetavec_0}$ is differentiable in quadratic mean by Lemma 7.6 of~\citet{vv1998}. Based on Assumptions (A3)--(A5) and Corollary~\ref{cor1}, the claim hence follows from Theorem 5.39 of~\citet{vv1998}.
\end{proof}

{{\begin{remark}
Similar as in the univariate case~\citep{moehotbue2017}, Corollaries~\ref{cor1},~\ref{cor2} also extend to the conditional regression models considered in Section~\ref{sec:mctm}.
\end{remark}}
}

\subsection{Parametric Bootstrap}\label{sec:pb}

The asymptotic results allow, at least in principle, the derivation of
confidence intervals also for transformed model parameters, by using the
Delta-rule.  However, many quantities of practical interest, such as the
correlation matrix of the implied Gaussian copula or the densities of marginal
distributions, are indeed highly nonlinear functions of the parameter vector
$\thetavec$.  Accordingly, a parametric bootstrap is a more
promising alternative.

Because the proposed multivariate transformation models
allow a direct evaluation of estimated joint CDFs, drawing bootstrap samples from the joint distribution is
straightforward.  For $\Prob_\rZ = \ND(0, 1)$ and with estimated marginal
transformation functions $\hat{\tilde \h}_j(\ry_j) = \basisy_j(\ry_j)^\top
\hat{\parm}_j, j = 1, \dots, J$ and estimated covariance matrix
$\hat\mSigma=\hat\mLambda^{-1}\hat\mLambda^{-\top}$, the parametric
bootstrap can be implemented by Algorithm~\ref{algo1}.
\begin{algorithm}
\caption{ Parametric bootstrap\vspace{0.15cm}}\label{algo1}
 Given $\hat{\tilde \h}=(\hat{\tilde \h}_1(\ry_1),\ldots,\hat{\tilde \h}_J(\ry_J))^\top$ and $\hat\mLambda$:\hfill\hphantom{adsfffffffffffffff}
\begin{algorithmic}[1]
\FOR{$b=1,\ldots,B$}
\STATE Generate
\[
\mZ_{1,b},\ldots,\mZ_{nb},\quad b=1,\ldots,B,\quad \mZ_{ib}\sim\ND(\nullvec,\mI_J), \quad i=1,\ldots,n.
\]
\STATE Compute
\[
\tilde\mZ_{1b},\ldots,\tilde\mZ_{nb},\quad \tilde\mZ_{ib}=\hat\mLambda^{-1}\mZ_{ib}.
\]
\STATE Compute recursively
\[
\mY_{1b},\ldots,\mY_{nb},\quad Y_{ijb}=\hat{\tilde \h}_j^{-1}(\tilde{Z}_{ijb}), \quad j=1,\ldots.J.
\]
\STATE Re-fit the model to obtain $\hat{\tilde \h}_{(b)},\hat\mSigma_{(b)}$.
\STATE Compute summaries using $\hat{\tilde \h}_{(1)},\ldots,\hat{\tilde \h}_{(B)}$, $\hat\mSigma_{(1)},\ldots,\hat\mSigma_{(B)}$
\ENDFOR
\end{algorithmic}
\end{algorithm}
\normalsize The inverse $\hat{\tilde \h}_j^{-1}$ exists because the estimated marginal
distribution function is strictly monotonically increasing. For simple
basis functions $\basisy_j$ (for example, linear functions), the inverse can
be computed analytically. For more complex basis functions, numerical
inversion has to be applied.

\subsection{Illustration: Bivariate Density Estimation} \label{sec:cars}

Unconditional MCTMs can be employed for multivariate density estimation.
For the famous 1920s cars data \citep{Ezekiel_1930} consisting of speed
and distance needed to stop for $50$ cars, the bivariate distribution was
estimated from an unconditional bivariate transformation model with
$\Prob_\rZ = \ND(0, 1)$, order $M = 6$ of Bernstein polynomials for the two
transformation functions, and a constant parameter $\lambda \in \RR$.  The
model is equivalent to a Gaussian copula, however, the marginal
distributions are highly non-Gaussian and were estimated by maximum
likelihood simultaneously with the correlation parameter $\lambda$.  The fit
of the bivariate density contours and marginal densities is given in
Figure~\ref{fig:cars}, which shows that the dependence between speed and distance is clearly nonlinear.  We obtained $\hat{\lambda} = -1.633$ (SE $0.273$),
which corresponds to a Pearson correlation of $0.853$ (thus a rank
correlations $\rho^S=0.8415$) and hence a highly
positive correlation between speed and distance after transformation to
normality.

\begin{figure}[t]
\begin{center}\includegraphics[width=0.5\textwidth,angle=0]{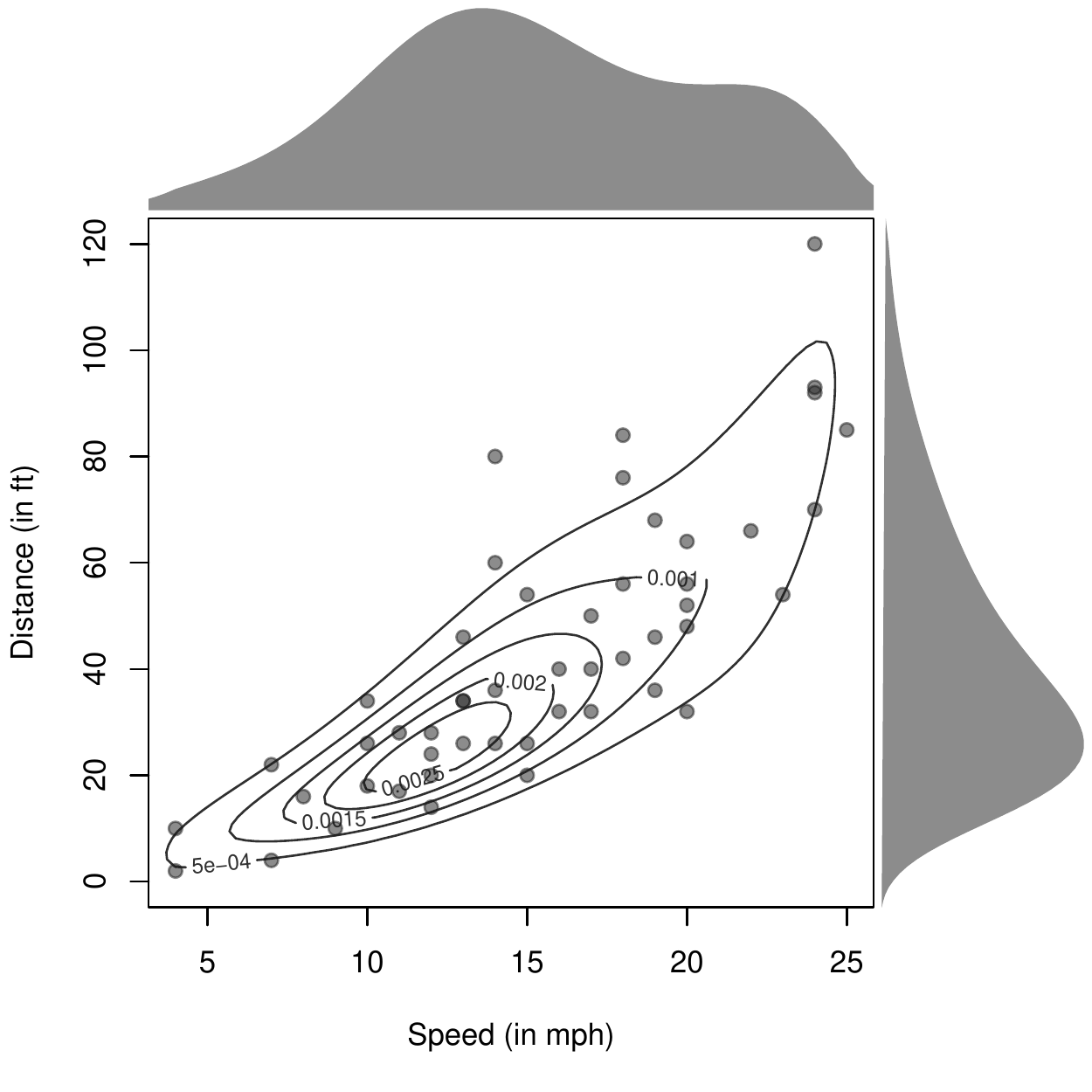}\end{center}
\caption{ Bivariate density estimation. Scatterplot of
         speed of cars and the distance needed to stop. Contours visualise
         the estimated joint density, corresponding marginal distributions
         are shaded in grey (top and right).
         \label{fig:cars}}
\end{figure}


\section{Extensions to Multivariate Regression} \label{sec:mctm}

\subsection{Multivariate Conditional Transformation Models} \label{subsec:mctm}

Multivariate regression models, \ie models for conditional multivariate
distributions given a specific configuration of covariates $\rX = \rx$, can be derived
from the unconditional multivariate transformation models introduced in
Section~\ref{sec:utm}.  The transformation function $\h$ has to be extended
to include a potential dependency on covariates $\mX$, and the
corresponding joint CDF $F_{\mY \mid
\mX=\xvec}$ is defined by a conditional transformation function
$\h(\yvec \mid \xvec)=(\h_1(\yvec \mid \xvec),\ldots,\h_J(\yvec \mid \xvec))^\top$.
By extending the unconditional transformation function
(\ref{eq:finalmodel}), we define the $J$ components of a multivariate conditional transformation function given
covariates $\rx$ as
\[ \h_j(\yvec \mid \rx) = \sum_{\jmath = 1}^{j - 1}
  \lambda_{j\jmath}(\rx) \tilde{h}_\jmath(\ry_\jmath \mid \rx) + \tilde{h}_j(\ry_j \mid \rx)
\]
where $\lambda_{j\jmath}(\rx)$ and $\tilde{\h}_j(\ry_j \mid \rx)$ are again
expressed in terms of basis function expansions.

For the marginal (with respect to the response $\ry_j$) conditional (given
covariates $\rx)$ transformation functions, this leads to a parameterisation
\[
 \tilde{h}_j(\ry_j \mid \rx) = \basisyx_j(\ry_j,\xvec)^\top \parm_j
\]
where the basis functions $\basisyx_j(\ry_j,\xvec)$, in general, depend on
both element $\ry_j$ of the response and the covariates $\xvec$. These can, for
example, be constructed as a composition of the basis functions
$\basisy_j(y_j)$ for only $\ry_j$ from the previous section combined with a
basis $\basisx_j(\xvec)$ depending exclusively on $\xvec$.  Specifically,
a purely additive model results from
$\cvec_j=(\basisy_j^\top,\basisx_j^\top)^\top$, and a flexible interaction
from the tensor product
$\cvec_j=(\basisy_j^\top\otimes\basisx_j^\top)^\top$.
Response-varying coefficients in distributional regression, or time-varying effects in survival
analysis, correspond to a basis $\cvec_j = (\basisy_j^\top \otimes (1,
\rx^\top)^\top)^\top$, \NK{see also Section~\ref{subsec:mctm2}}. A simple
linear transformation model for the marginal conditional distribution
can be parameterised as
\[
\basisyx_j(\ry_j,\xvec)^\top \parm_j = \basisy_j(\ry_j)^\top \parm_{j,1} -
\rx^\top \shiftparm_j,
\]
with parameters $\parm_j=(\parm_{j,1}^\top,\shiftparm_{j}^\top)^\top$. The
model restricts the impact of the covariates to a linear shift $\rx^\top
\shiftparm_j$.  For arbitrary
choices of $\Prob_\rZ$, the marginal distribution, given covariates $\mX =
\rx$, is then a marginal linear transformation model
\begin{eqnarray*}
\Prob(\rY_j \le \ry_j \mid \rX = \rx) & = &
  \Phi_{0, \sigma^2_j}\left(\tilde{\h}_j(\ry_j \mid \rx)\right) =
  \Phi_{0, \sigma^2_j}(\Phi^{-1}_{0, \sigma^2_j}(\pZ(\basisy_j(\ry_j)^\top \parm_{j,1} -
        \rx^\top \shiftparm_{j})) \\
& = & \pZ(\basisy_j(\ry_j)^\top \parm_{j,1} - \rx^\top \shiftparm_{j})
\end{eqnarray*}
and, consequently, the regression coefficients $\shiftparm_{j}$ can be directly
interpreted as marginal log-odds ratios ($\pZ^{-1} = \text{logit}$) or
log-hazard ratios ($\pZ^{-1} = \text{cloglog}$; this is a marginal Cox
model).  Details of the parameterisations $\basisyx_j(\ry_j,\xvec)^\top \parm_j$
for the marginal transformation functions and a discussion of the practical aspects in different
areas of application are provided in \citet{moehotbue2017}.

For practical applications, an important and attractive feature of multivariate
transformation models for multivariate regression is the possible dependency
of $\mLambda$ on covariates $\rx$.  Thus, the dependence structure of $\mY$
potentially changes as a function of $\rx$, if suggested by the data.  This
feature is implemented by covariate-dependent coefficients of
$\mLambda(\rx)$. A simple linear model of the form
\[
\lambda_{j\jmath}(\rx) = \alpha_{j\jmath} + \rx^\top \gammavec_{j\jmath},
\quad 1 \le \jmath < j \le J
\]
is one option. The case $\gammavec_{j\jmath} = \bold{0}$ implies that
the correlation between $\rY_j$ and $\rY_\jmath$ does not depend on $\rx$.
More complex forms of additive models would also be conceivable.
Of course, the number of parameters grows quadratically in $J$, such that
models that are too complex may require additional penalisation terms in the likelihood.

\subsection{Application: Trivariate Conditional Transformation Models for Undernutrition in India} \label{subsec:mctm2}

To illustrate several practical aspects of the parameterisation and
interpretation of MCTMs, we present a trivariate analysis of undernutrition in India
in the following.  Childhood undernutrition is among the most urgent
problems in developing and transition countries.  A rich database available
from Demographic and Health Surveys (DHS, \url{https://dhsprogram.com/})
provides nationally representative information about the health and
nutritional status of populations in many of those countries.  Here we use
data from India that were collected in 1998.  Overall, the data set
comprised 24,316 observations, after pre-processing of the data. For the latter, we use the same
steps as in~\cite{FahKne2011}, see the documentation available at \url{http://www.smoothingbook.org} together with further details on the pre-processing steps. We used three indicators, \textit{stunting},
\textit{wasting} and \textit{underweight}, as the trivariate response
vector, where \textit{stunting} refers to stunted growth, measured as an
insufficient height of a child with respect to age, while
\textit{wasting} and \textit{underweight} refer to insufficient weight for
height and insufficient weight for age, respectively.  Hence
\textit{stunting} is an indicator of chronic undernutrition,
\textit{wasting} reflects acute undernutrition and \textit{underweight}
reflects both.  Our aim was to model the joint distribution of
\textit{stunting}, \textit{wasting} and \textit{underweight} conditional
upon the age of the child.
{To the best of our knowledge, there is no implementation available that could estimate the dependence structure and the marginal distributions nonparametrically and conditional on covariates beyond a trivariate normal distribution (which is implemented in the \textsf{R} add-on package \citep{pkg:VGAM}).}

\paragraph{Model Specification.}
\NK{The focus of our analysis was the variation in the trivariate undernutrition
process with respect to the age of the child (in months). To be flexible in the marginal distributions and the dependence structure, we specify response-varying marginal models for $\tilde{\h}_j(\ry_j \mid \text{age})$ of the form
%
\[
\tilde{\h}_j(\ry_j \mid \text{age}) = \basisy_j(\ry_j)^\top \parm_{j,1} -
    \basisy_j(\ry_j)^\top\ \betavec_{j} \times \text{age}, \quad j
\in \{\text{stunting}, \text{wasting}, \text{underweight}\},
\]
%
%
while the coefficients of $\Lambda$ are parameterised through
\[
\lambda_{j\jmath}(\text{age}) = \basisx(\text{age})^\top
\gammavec_{j\jmath}, \quad \jmath < j \in \{\text{stunting}, \text{wasting}, \text{underweight}\}.
\]
We choose the normal reference distribution $\Prob_\rZ = \ND(0, 1)$ and the basis functions
$\basisy_j$ and $\basisx(\text{age})$ are Bernstein polynomials of order six
(Section~\ref{sim1} gives a rational for choosing this default).
Furthermore, the parameters $\parm_{j, 1}$ were estimated
under the constraint $\mD \parm_{j,1} > \nullvec$, where $\mD$ is a difference
matrix. This leads to monotonically increasing estimated marginal
transformation functions $\tilde{\h}_j$ \citep{moehotbue2017}. No such shape
constraint was applied to functions of age, \ie the parameters $\betavec_{j}$
and $\gammavec_{j\jmath}$ were estimated unconstrained.}

\paragraph{Results for Marginal Distributions.}

Figure~\ref{fig:margins} depicts the estimated marginal conditional CDFs $F_j(\ry_j \mid \text{age})$ (first row) and marginal densities $f_j(\ry_j \mid \text{age})$ (second row), with the different
colours indicating the ages of the children.  Clearly, the shapes
of the marginals differ for the three indicators, where the differences are mostly restricted to a simple shift effect for stunting, while varying amounts of asymmetry are present for stunting and even more complex changes in the shape of the distribution are identified for underweight.

\begin{figure}[t]
\centering\includegraphics[width=0.95\textwidth,angle=0]{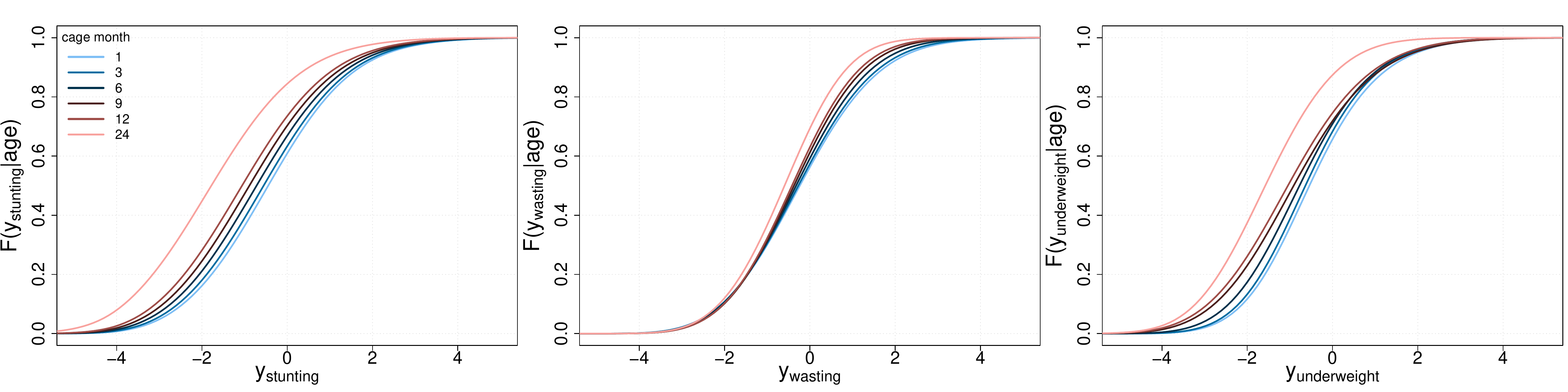}\\
\centering\includegraphics[width=0.95\textwidth,angle=0]{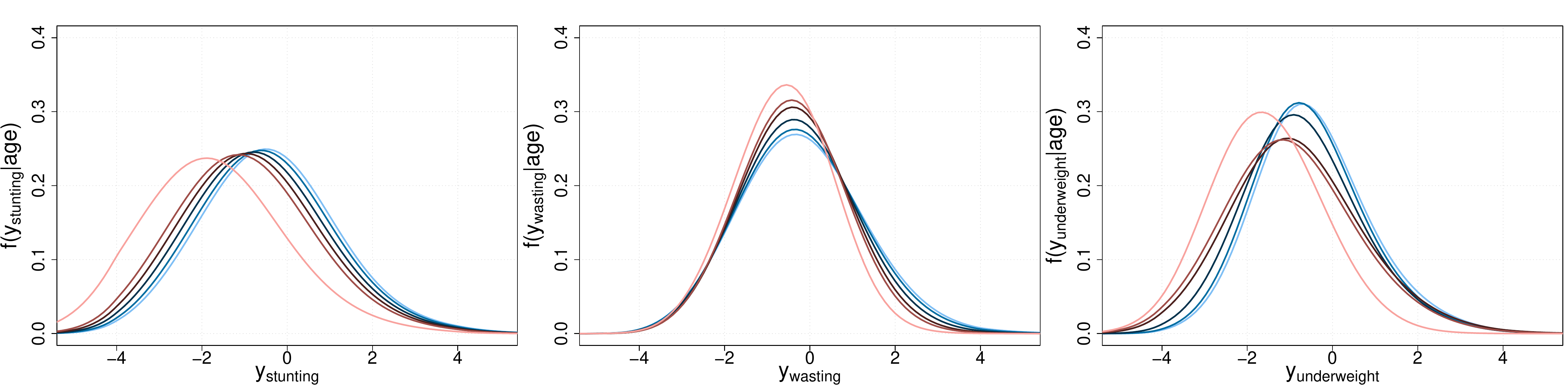}
\caption{ Undernutrition.  Estimated marginal conditional CDFs $F_j(\ry_j \mid \text{age})$, (first row) and marginal densities $f_j(\ry_j \mid \text{age})$ (second row), $j \in
\{\text{stunting}, \text{wasting}, \text{underweight}\}$ for selected ages in months.}
\label{fig:margins}
\end{figure}

\paragraph{Results for the Dependence Structure.}

Figure~\ref{fig:undernutrition} depicts the conditional rank
correlations $\rho^S$ between \text{stunting}, \text{wasting} and \text{underweight}
as functions of $\text{age}$ along with the point estimates and 95\% confidence intervals
obtained from $B=1,000$ parametrically drawn bootstrap samples (see
Algorithm~\ref{algo1} in Section~\ref{sec:pb}).
The rank correlation between $\text{stunting}$ and $\text{wasting}$ is initially
negative around $-0.4$ for young children and then approaches zero with increasing age of the child.
This finding is in line with the study of~\citet{KleKneKlaLan2015}, who
reported the results of a bivariate analysis based on normal and $t$
distributions. In our study, the remaining dependencies were positive and the variation in the rank
correlation over age was stronger for the relationship between wasting and
underweight compared to stunting and underweight, which  varied only between $0.6$ and
$0.75$ whereas the variations of the remaining rank correlations
explained by the age of the child were more substantial.

{The parametric bootstrap requires re-estimation of the model $B$ times which can be time consuming without parallelisation. A faster alternative in cases where $n$ is large is to draw samples $\hat\thetavec_{(b)}$, $b=1,\ldots,B$ from the asymptotic normal distribution derived in Corollary~\ref{cor2} with mean vector equal to the MLE from~\eqref{eq:MLE} and covariance matrix~\eqref{eq:as:cov}. These samples can then be used to compute $\hat{\tilde \h}_{(1)},\ldots,\hat{\tilde \h}_{(B)}$, $\hat\mSigma_{(1)},\ldots,\hat\mSigma_{(B)}$.
Because in our example $n$ is rather large we compared both versions of confidence intervals and found them to be very similar (compare Figure~\ref{fig:undernutrition:both} in Appendix~\ref{app:undernutrition}).}

\begin{figure}[t]
\centering\includegraphics[width=0.99\textwidth,angle=0]{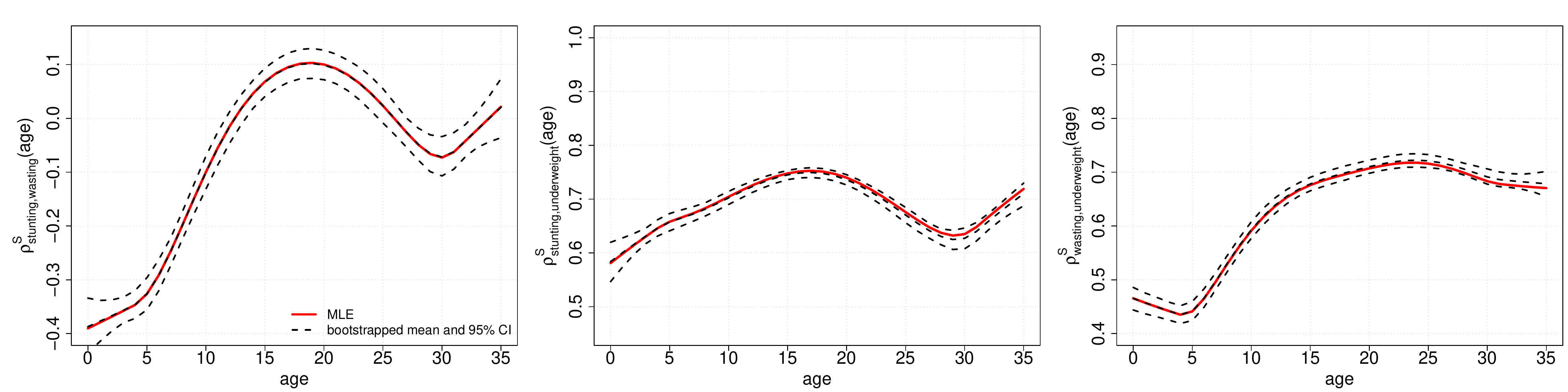}
\caption{ Undernutrition.  Spearman's rho
$\rho_{12}^S(\text{age})=\rho^S(y_{\text{\scriptsize{stunting}}},y_{\text{\scriptsize{wasting}}}|\text{age})$,
$\rho_{13}^S(\text{age})=\rho^S(y_{\text{\scriptsize{stunting}}},y_{\text{\scriptsize{underweight}}}|\text{age})$
and
$\rho_{23}^S(\text{age})=\rho^S(y_{\text{\scriptsize{wasting}}},y_{\text{\scriptsize{underweight}}}|\text{age})$.
Shown are the maximum likelihood estimates (solid red line),
the bootstrapped mean estimate and the 95\% bootstrapped
confidence intervals (dashed black lines).} \label{fig:undernutrition}
\end{figure}

\section{Empirical Evaluation}\label{sec:sim}

In this section, we provide empirical evidence on the performance of our
MCTMs.  In Section~\ref{sim1} we demonstrate that the performance of our flexible
transformation model was highly competitive relative to two parametric and
correctly specified alternatives.  Hence, our model is useful not only when
parametric assumptions about the marginals are questionable.  In
Section~\ref{sim2}, a trivariate example demonstrates that our model is also
applicable to situations beyond the bivariate case, as underpinned by five and ten-dimensional illustrations in Section~\ref{sim3}.  To the best of our
knowledge, there are currently no directly competing models that allow for a
similar flexibility.

\subsection{Bivariate Simulation}\label{sim1}
\paragraph{Simulation Design.}

We simulated $R=100$ data sets of size $n=1,000$, following a method similar to that used in the parametric bootstrap procedure:
\begin{enumerate}
 \item Covariate values $x$ were simulated as i.i.d.~variables, where $x\sim\UD[-0.9,0.9]$.
 \item The latent variables $\tilde{\zvec}_{ir}\in\dsR^2$ were generated as
 $$
\tilde{\zvec}_{ir}=\mLambda_i^{-1}\zvec_{ir},\quad i=1,\ldots,n;\,r=1,\ldots,R
 $$
 with
 $$
 \zvec_{ir}\sim\ND(\nullvec,\mI_2) \mbox{ and } \mLambda_i=\left(\begin{matrix} 1 & 0 \\ x_{ir}^2 & 1\end{matrix}\right),
 $$
 such that
 \[
 \mbox{Cov}(\tilde z_{i1},\tilde z_{i2}|x_i)\equiv\mSigma_i(x_i) = \begin{pmatrix}
 1 & -x_i^2\\
 -x_i^2 & 1+x_i^4
 \end{pmatrix}.
 \]
 \item From the latent variables, the observed responses were computed as
 \[
 \yvec_{ir}=[F_{1}^{-1}\{\Phi_{0,1}(\tilde z_{ir,1})\},F_{2}^{-1}\{\Phi_{0,\sigma_{i2}^2}(\tilde z_{ir,2})\}]^\top,
 \]
where $\sigma_{i2}^2=1+x_i^4$ and $F_1$ and $F_2$ are the CDFs of two Dagum
distributions with parameters $a_1=\exp(2),b_1=\exp(1),p_1=\exp(1.3)$ and
$a_2=\exp(1.8),b_2=\exp(0),p_2=\exp(0.9)$, respectively.  Note that the CDF
of an unconditional Dagum distribution~\citep{Kleiber.1996} reads
 \[
 F(y) = \left(1+\left(\frac{y}{b}\right)^{-a}\right)^{-p},\quad\mbox{ for } y>0 \mbox{ } a>0,b>0,p>0.
 \]

\end{enumerate}
This model specification is equivalent to a Gaussian copula model with Dagum
marginals, but by its construction, the first marginal is independent of
the covariate $x$, while the scale parameter $b_2$ of the second marginal varies as a function
of $x$.

As competitors for MCTMs, we considered Bayesian structured additive
distributional regression models \citep{KleKneLanSoh2015}, as implemented in
the software package {BayesX} \citep{BelBreKleKneLanUml2015}, and vector
generalised additive models \citep[\texttt{VGAM},][]{VGAMbook}, as
implemented in the corresponding \textsf{R} add-on package \citep{pkg:VGAM}.
For \texttt{VGAM} and {BayesX}, we employed the true specification, \ie a
Gaussian copula with correlation parameter
$\rho(x_i)=\nicefrac{-\lambda(x_i)}{\sqrt{1+\lambda(x_i)^2}}$ and Dagum
marginals, in which the parameter $b_2$ of the second marginal depends on
$x$ but the first marginal as well as the parameters $a_2$ and $p_2$ did
not.  For BayesX, both the predictor for $b_2$ and the correlation parameter $\rho$ of
the Gaussian copula were specified using cubic B-splines with $20$ inner
knots on an equidistant grid in the range of $x$ with a second-order random
walk prior \citep[following suggested default values by][]{LanBre2004}; the
other parameters of the marginals were estimated as constants.

Because \texttt{VGAM} does not allow for simultaneous estimation of the
marginals and the dependence structure, we first estimated the Dagum margins
with constant parameters $a_1,b_1,p_1,a_2,p_2$ and covariate-dependent
parameters $b_2$.  The copula predictor was then estimated with plugged-in
estimates of the margins, using cubic B-splines according to the
\texttt{sm.ps} function of the package.

For the multivariate transformation models (denoted as \texttt{MCTM-6/6}), we
employed Bernstein polynomials of order six (as in
Section~\ref{subsec:mctm2}) for both the transformation functions ($\tilde{\h}_1$
and $\tilde{\h}_2$) and the parameter $\lambda$.  Because of the monotonicity constraints on $\tilde{\h}_1$ and
$\tilde{\h}_2$, the order of the corresponding Bernstein polynomials can be
larger without decreasing model performance \citep{moehotbue2017,
Hothorn_2018_JSS}.  In contrast, too large values of a Bernstein polynomial
for $\lambda$ will result in overly erratic estimates with negative impact
on model performance. We demonstrate this effect empircially by two
additional MCTMs with order $M = 3$ for $\lambda$ and with orders $M = 6$
(\texttt{MCTM-6/3}), and $12$ (\texttt{MCTM-12/3}), for the
transformation functions $\tilde{\h}_1$ and $\tilde{\h}_2$.

\begin{figure}[t]
\centering\includegraphics[width=0.99\textwidth,angle=0]{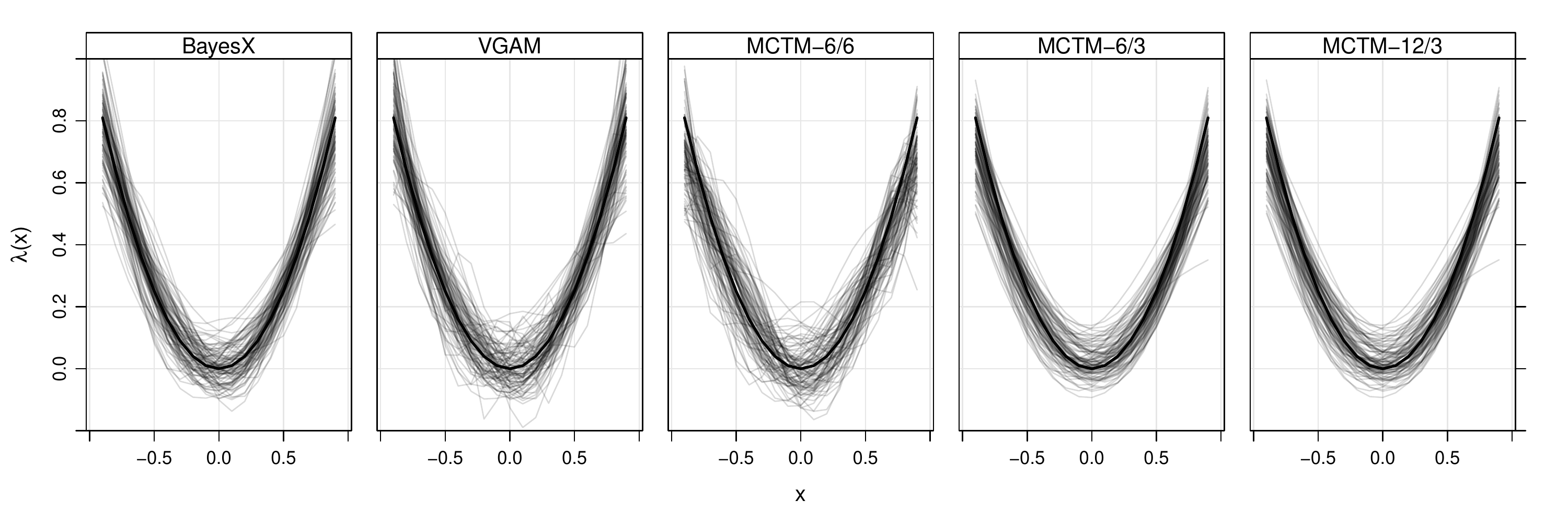}
\caption{ Bivariate simulation: Function estimates $\hat\lambda(x)$ for the effect $\lambda(x)=x^2$ on the correlation parameter. The black line is the true function and the grey lines are the estimates of the $R=100$ replicates.}
\label{fig:sim2d}
\end{figure}

\begin{figure}[t]
\centering\includegraphics[width=0.4\textwidth,angle=-90]{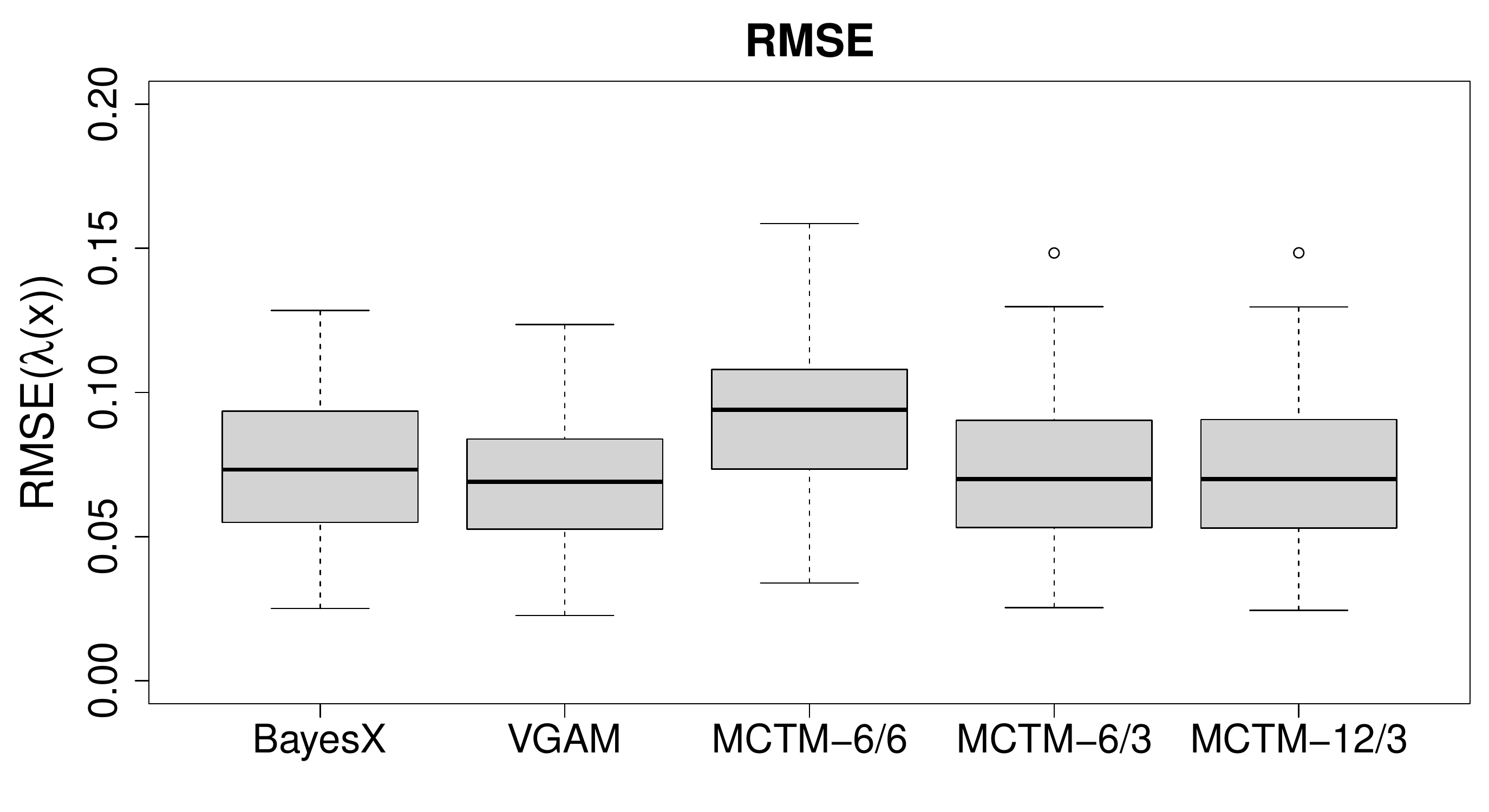}
\caption{ Bivariate simulation: RMSE($\lambda(x),\hat\lambda(x))$  for BayesX, \texttt{VGAM} and \texttt{MCTM} with Bernstein bases of order 6, 6, and 12 for the marginals and 6, 3, and 3 for $\lambda(x)$ respectively.}
\label{fig:sim2dmse}
\end{figure}

\paragraph{Measures of Performance.}
Let $\hat\lambda^{(r)}(x)$ be the estimate of the  lower triangular element of $\mLambda^{(r)}$ obtained from data replicate $r=1,\ldots,R$.
To evaluate the performance of the three competing methods, we investigated the function estimates $\hat\lambda^{(r)}(x)$ relative to $\lambda(x)$ as well as the root mean squared errors   $\mbox{RMSE}(\lambda,\hat\lambda^{(r)})=\sqrt{(\lambda(x_g)-\hat\lambda^{(r)}(x_g))^2}$ on a grid of length $G=100$ within the range of $x$.

\paragraph{Results.}
Figure~\ref{fig:sim2d} shows the estimates for
$\lambda(x)=x^2$ of the $100$ simulated data sets for BayesX (first panel), \texttt{VGAM} (second panel), and \texttt{MCTM} (last three panels).  All three models reproduced the
general functional form correctly.  However, BayesX yielded the most
reasonable smoothing properties, while \texttt{VGAM} has the wiggliest curves. 
Although BayesX and \texttt{VGAM} employed the correct model specification in terms of the parametric distribution assumption for the
marginal distributions and the correlation parameter, the performance of
\texttt{MCTM} is competitive in terms of the RMSE
(Figure~\ref{fig:sim2dmse}) without the requirement to either estimate the marginal distributions in a first step and plug the empirical copula data in to obtain the dependence structure (as for VGAM) or to specify predictors for parametric marginal distributions (as for BayesX). Both requirements are restrictive in practice
because typically it is impossible to pick the \lq correct\rq\,parametric distribution that exactly matches the marginal distributions of the underlying random variables.

A larger value of $M = 12$ for the transformation functions did not lead to
degraded performance, however, a less flexible parameterisation of $\lambda$
was better able to recover the quadratic function.

\subsection{Trivariate Simulation}\label{sim2}

\paragraph{Simulation Design.}
We employed a similar setting as in the previous section, with $R=100$ data sets of size $n=1,000$,  $x\overset{\mbox{\scriptsize{i.i.d.}}}{\sim}\UD[-0.9,0.9]$, but Steps~2 and~3 of the simulation design in Section~\ref{sim1} were extended to three dimensions.
The latent variables $\tilde{\zvec}_{ir}\in\dsR^3$ were generated as
 $$
\tilde{\zvec}_{ir}=\mLambda_i^{-1}\zvec_{ir},\quad i=1,\ldots,n;\,r=1,\ldots,R
 $$
 with
 $$
 \zvec_{ir}\sim\ND(\nullvec,\mI_3) \mbox{ and } \mLambda_i=\left(\begin{matrix} 1 & 0 & 0\\ x_{ir}^2 & 1 & 0\\ -x_{ir} & x_{ir}^3-x_{ir} & 1\\ \end{matrix}\right).
 $$
Consequently,
 \[
 \mbox{Cov}(\tilde z_{i1},\tilde z_{i2},\tilde z_{i3}|x_i)\equiv\mSigma_i(x_i) = \begin{pmatrix}
 1 &  \sigma_{i12} &  \sigma_{i13}\\
 \sigma_{i12} & \sigma_{i2}^2 &  \sigma_{i23}\\
 \sigma_{i13} &  \sigma_{i23} & \sigma_{i3}^2
 \end{pmatrix},
 \]
 with
 \begin{eqnarray*}
 \sigma_{i2}^2 &=& {\lambda_{21}(x_i)^2+1}\\
 \sigma_{i3}^2 &=& {(\lambda_{21}(x_i)\lambda_{32}(x_i)-\lambda_{31}(x_i))^2+\lambda_{32}(x_i)^2+1}\\
 \sigma_{i12} &=& -\lambda_{21}(x_i),\\
 \sigma_{i13} &=& \lambda_{21}(x_i)\lambda_{32}(x_i)-\lambda_{31}(x_i),\\
 \sigma_{i23} &=& -\lambda_{21}(x_i)^2 \lambda_{32}(x_i)+\lambda_{21}(x_i)\lambda_{31}(x_i)-\lambda_{32}(x_i).
 \end{eqnarray*}
To compute $\yvec_{ir}$ in Step~3, we additionally chose $F_3$ to be the CDF of another Dagum distribution with parameters $a_3=\exp(1.5)$, $b_3=\exp(-0.9)$ and $p_3=\exp(1)$, such that
\[
\yvec_{ir}=[F_{1}^{-1}\{\Phi_{0,1}(\tilde z_{ir,1})\},F_{2}^{-1}\{\Phi_{0,\sigma_{i2}^2}(\tilde z_{ir,2})\},F_{3}^{-1}\{\Phi_{0,\sigma_{i3}^2}(\tilde z_{ir,3})\}]^\top,\quad i=1,\ldots,n;\,r=1,\ldots,R.
\]
Note that the marginals of $y_2$ and $y_3$ (or more precisely
their marginal parameters $b_2$ and $b_3$) depend on the covariate $x$.


\paragraph{Results.}
Figure~\ref{fig:sim3d} shows the function estimates for
the three parameters $\lambda_{21}(x_i)$, $\lambda_{31}(x_i)$ and
$\lambda_{32}(x_i)$.
The grey lines indicate overall convincing results for all replicates,
without any problematic outliers even though the estimation errors increased
with the increasing complexity of the functional form. We omit the RMSE plot because it yielded qualitatively  the same results.

\begin{figure}[h]
\centering\includegraphics[width=0.99\textwidth,angle=0]{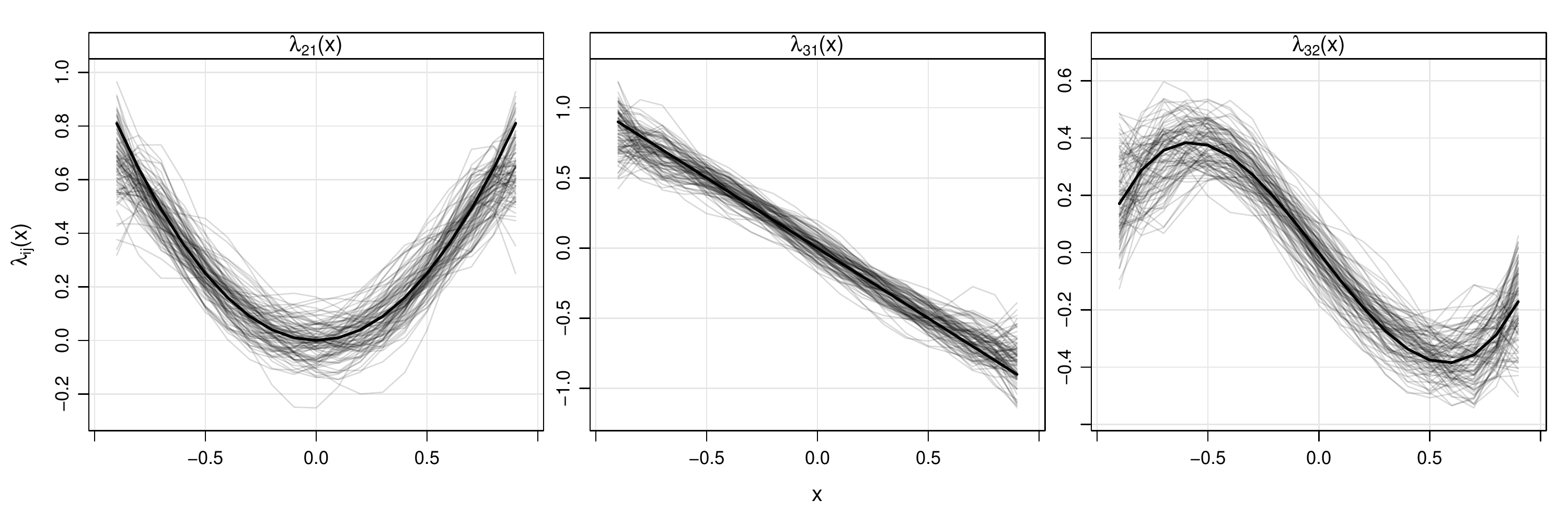}
\caption{ Trivariate simulation. Function estimates of $\lambda_{21}(x)$ (left), $\lambda_{31}(x)$ (middle) and $\lambda_{32}(x)$ (right). The black line is the true function and the grey lines are the estimates of $R=100$ replicates.}\label{fig:sim3d}
\end{figure}

\subsection{Higher-Dimensional Responses}\label{sim3}

{To investigate the performance of MCTMs in higher response dimensions, we
conducted an experiment with $5$- and $10$-dimensional responses.  Specifically,
we employ the settings from Section~\ref{sim2} but add $2$ and $7$ marginally Dagum
distributed components assuming independence, \ie $\lambda_{ij}(x)=0$ for
$i>j,i>3$.  The function estimates for the three
parameters $\lambda_{21}(x_i)$, $\lambda_{31}(x_i)$ and $\lambda_{32}(x_i)$
is qualitatively similar to the results in Figure~\ref{fig:sim3d}, see Figure~\ref{fig:sim510d},
while the true zero components of $\Lambda$ are identified correctly (not
shown in the Figure).  Furthermore the scale of the RMSE does not increase
but is for all replicates and all $\lambda_{ij}(x)<0.2$ as before, so that we omit the additional plot. Overall, these results indicate  very satisfying results even for high-dimensional response situations.}

\begin{figure}[h]
\centering\includegraphics[width=0.99\textwidth,angle=0]{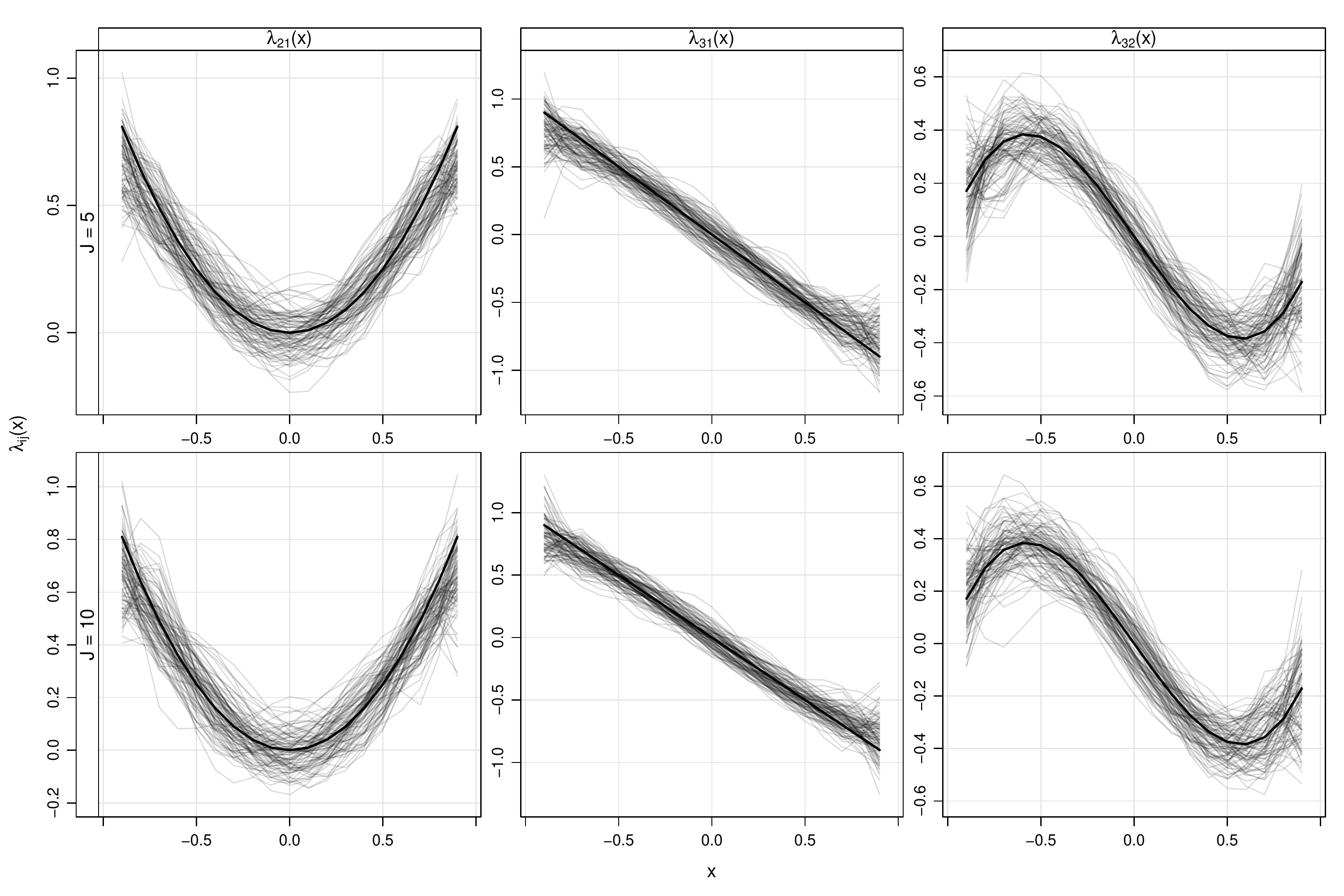}
\caption{ 5-dimensional simulation (upper row) and 10-dimensional simulation (lower row). Each column shows function estimates of $\lambda_{21}(x)$ (left), $\lambda_{31}(x)$ (middle) and $\lambda_{32}(x)$ (right). The black line is the true function and the grey lines are the estimates of $R=100$ replicates.}\label{fig:sim510d}
\end{figure}

\section{Summary and Discussion}\label{sec:conc}

Renewed interest in transformation models
\citep{manuguerra_heller_2010,mclain_ghosh_2013,chernozhukov_2013,
HotKneBue2014,Liu_Shepherd_Li_2017,moehotbue2017,garcia_marder_wang_2018} has been
motivated by the combination of model flexibility, parameter
interpretability, and broad applicability of this class of regression
models.  Rather than assuming a specific distribution of the response,
transformation models rely on a suitable transformation
of the response into an a priori defined reference distribution.
The problem of directly estimating a distribution is replaced by the problem
of estimating this transformation function. However, in many cases,
conceptually and computationally simple solutions to this problem exist
\citep{Hothorn_2018_JSS}.

The MCTMs introduced herein apply
this core principle to multivariate regression.  Similar technical approaches
have been used in discriminant analysis \citep[][refer to transnormal
models]{Lin_Jeon_2003}, quantile regression \citep{Fan_Xue_Zou_2016},
receiver-operating characteristic curve analysis \citep{LyuYinZha2019}, and
are ubiquitous in neural networks as flows,
but the generality of multivariate transformation models for regression
purposes had yet to be fully developed.  MCTMs enjoy the
same flexibility, parameter interpretability and broad applicability as
their univariate counterparts.  The models are highly adaptive and, in our
simulation experiments, performed akin to parametric models that exactly matched
the data-generating process. While the parametric models are often restricted to bivariate responses, our MCTMs work well far beyond that as illustrated empirically for up to ten dimensions.  Appropriate model parameterisations allow both
the marginal distributions and the joint distribution to depend on
covariates.  An important application of this new model class is the
estimation of conditional dependencies, while accounting for covariate
effects in the marginal distributions.

{Conceptually, our framework  carries over to multivariate random
vectors that are discrete or censored.  In particular, both discrete and
censored data can be interpreted as incomplete information $\ubar{\yvec}_i <
\yvec_i \le \bar{\yvec}_i$, where, instead of exact observations $\yvec_i
\in \dsR^J$, only the upper and lower boundaries $\ubar{\yvec}_i$ and
$\bar{\yvec}_i$ respectively, are observed.  For discrete data, the underlying rationale
would be that discrete realisations are obtained by discretisation from
an underlying continuous process.  For censored data, the interval
boundaries result from the censoring mechanism, and random right censoring, left
censoring as well as interval censoring can be handled by appropriate
choices of $\ubar{\yvec}_i$ and $\bar{\yvec}_i$.
\\
For $\Prob_\rZ = \ND(0, 1)$, the log-likelihood contributions are then given by
\[
\ell_i(\thetavec) = \log\left(
\int_{\tilde{\h}(\ubar{\yvec}_i)}^{\tilde{\h}(\bar{\yvec}_i)}
\phi_{\nullvec, \mSigma}(\tilde{\zvec}) \, d\tilde{\zvec}
\right),
\]
where $\tilde{\h}(\yvec) = (\tilde{\h}_1(\ry_1), \dots,
\tilde{\h}_J(\ry_J))$ and $\mSigma=\mLambda^{-1}\mLambda^{-\top}$.
Numerical approximations need to be applied in evaluating these
likelihood contributions. For $J > 2$, the quasi-Monte-Carlo algorithm by
\cite{Genz1992} seems especially appropriate because it relies on the
Cholesky factor $\mLambda$ of the precision matrix rather than the
covariance matrix $\mSigma$.  Nonetheless, the simplicity and explicit structure
of the score contributions from the previous sections do not carry over to
these more general cases, and the numerical evaluation of the log-likelihood
becomes more demanding. We will investigate these challenges in some future work.}

 More complex models, for example, models
featuring additive or spatial effects are conceptually easy to integrate if present in the data
because they only require the addition of suitable
penalty terms to the log-likelihood.  In addition, the analytic
expressions for score and Fisher information functions presented herein
apply only to a standard normal reference distribution; adaptations to the
general case beyond linear dependence structures are still needed.

\section*{Computational Details}

\NK{A reference implementation of conditional and unconditional multivariate
transformation models is available in package \textbf{tram}
\citep{pkg:tram}.  Augmented Lagrangian Minimization implemented in the
\texttt{auglag()} function of package \textbf{alabama} \citep{pkg:alabama} was
used for optimising the log-likelihood, with starting values obtained from
marginal transformation models.

Source code for the reproduction of the empirical results presented in
Sections~\ref{sec:mctm} and \ref{sec:sim} is distributed as part of this
package; the two illustrations can be executed
from within \textsf{R}}
\begin{verbatim*}
install.packages("tram")
library("tram")
example(mmlt)
demo("undernutrition")
\end{verbatim*}
Empirical results were obtained using \textsf{R} \citep[version 4.0.2.,][]{R},
a developer version of BayesX \citep[][]{BelBreKleKneLanUml2015},
\textbf{tram} \citep[version 0.5-1,][]{pkg:tram}, and
\textbf{VGAM} \citep[version 1.1-3,][]{pkg:VGAM}. Source code for
simulations is available from
\begin{verbatim*}
system.file("simulations",package="tram")
\end{verbatim*}

\section*{Acknowledgements}
The authors thank two referees and an associate editor for helpful comments that
improved the manuscript. The authors gratefully acknowledge funding through the Emmy
Noether grant KL 3037/1-1 (Nadja Klein) from the Deutsche Forschungsgemeinschaft (DFG, German Research Foundation), and SNF grant 200021-184603 from the Swiss National Science Foundation (Torsten Hothorn).

\appendix

\renewcommand{\thesection}{Part~\Alph{section}}
\setcounter{figure}{0} \renewcommand\thefigure{\Alph{figure}}
\renewcommand\thetable{\Alph{table}}

\section{Some Properties of the MCTM}\label{app:theos}

\subsection{Dependence Structure}\label{app:theos:dep}

Let  $\mP=\mSigma^{-1} = \mLambda^\top \mLambda$ be the precision matrix of the distribution of
$\mtildeZ$, and $\mY_{-j\jmath}$ and $\mtildeZ_{-j\jmath}$ denote the vectors of the
observed and transformed responses, excluding elements $j$ and $\jmath$. Let furthermore $\mR= \mS\mSigma \mS$, $\mS=\diag(\sigma_1^{-1},\ldots,\sigma_J^{-1})$, be the corresponding correlation matrix and $C_{\scriptsize{Ga}}^{j\jmath}$ the bivariate sub-Gaussian copula of components $j$ and $\jmath$ and correlation matrix entry $R[j,\jmath]$.

\textbf{Conditional Independence. }{\em The entries in $\mLambda$ determine the conditional
independence structure between the transformed responses $\tilde{Z}_j$ (and
therefore, implicitly, also the observed responses $Y_j$), i.e.}
\[
 Y_j \perp Y_\jmath \mid \mY_{-j\jmath}\quad\Longleftrightarrow\quad
\tilde{\rZ}_j \perp
\tilde{\rZ}_\jmath \mid \mtildeZ_{-j\jmath}\quad\Longleftrightarrow\quad \mP[j,\jmath]=0
\]
\begin{proof}
Since $\dsP(\mY\le \yvec) = \dsP(\mtildeZ\le\ztildevec)$ holds, the dependence structure of $\mY$ is that of $\tilde\mZ$. However, $\tilde\mZ~\sim\ND_J(\nullvec_J,\Sigma)$ with $\mSigma = \mLambda^{-1} \mLambda^{-\top}$. Hence, the result is a direct consequence of Theorem 2.2 of  \citet{RueHel2005}
\end{proof}

\textbf{Measures of Dependence.}
\emph{\begin{itemize}
\item[(i)]
For $q\in(0,1)$ and $U_{j}= F_j(\tilde h_{j}^{-1}(\tilde Z_{j}))$, $U_{\jmath}= F_j(\tilde h_{\jmath}^{-1}(\tilde Z_{\jmath}))$, the lower and upper quantile dependence are
\begin{eqnarray*}
\lambda^L_{j\jmath}(q|\mLambda) &\equiv &\mbox{Pr}(U_j<q|U_{\jmath}<q)=\frac{C_{\scriptsize{Ga}}^{j\jmath}(q,q)}{q}\,,\mbox{ and}\\
\lambda^U_{j\jmath}(q|\mLambda) &\equiv &\mbox{Pr}(U_j>q|U_{\jmath}>q)=\frac{1-2q+C_{\scriptsize{Ga}}^{j\jmath}(q,q)}{1-q}\,.
\end{eqnarray*}
\item[(ii)] The lower and upper extremal tail dependence
\[
\lambda_{j \jmath}^L=\lim_{q \downarrow 0}\lambda^L_{j\jmath}(q|\mLambda)=0\,,\mbox{ and }
\lambda_{j \jmath}^U=\lim_{q\uparrow 1}\lambda^U_{j\jmath}(q|\mLambda)=0\,.
\]
\item[(iii)] Spearman's rho is
\[
\rho^S_{j \jmath}(\mLambda) =
 \frac{6}{\pi} \arcsin(R[j,\jmath]/2)\,,
\]
where $R[j,\jmath]$ is as defined above and is a function of
$\mLambda$.
\item[(iv)] Kendall's tau is
\[
\tau^K_{j \jmath}(\mLambda) =
\frac{2}{\pi} \arcsin(R[j,\jmath])\,.
\]
\end{itemize}
}
\begin{proof}
The proofs for (i) and (ii) can be found in \cite{ColHefTaw1999}, and for (iii) and (iv) in \cite{FanFanKot2002}.
\end{proof}

\subsection{Details on Marginal and Conditional Distributions}\label{app:marg:cond}

Assume that the random
vector $\mY$ is partitioned into $\mY_{\calI}$ and $\mY_{\calI^\calC}$, where
$\calI\subset\{1,\ldots,J\}$ is a non-empty set of $I$ indices
$j_1,\ldots,j_I$ and $\calI^{\calC}=\{1,\ldots,J\}\setminus\calI$ is its
complement, consisting of all indices not contained in $\calI$.
The vectors $\mtildeZ_{\calI}$ and $\mtildeZ_{\calI^\calC}$ can be similarly defined.  The
marginal distribution of $\mtildeZ_{\calI}$ is then given by
$\ND_I(\nullvec_I,\mSigma_\calI)$, where $\mSigma_\calI$ is the submatrix of
$\mSigma$ containing the elements related to the subset $\calI$.  We can
therefore deduce both the marginal CDF and the density of
$\mY_\calI$ as
\[
 \dsP(\mY_\calI\le \yvec_\calI) = \dsP(\mtildeZ_\calI\le\ztildevec_\calI) = \Phi_{\nullvec_I, \mSigma_\calI}(\ztildevec_\calI)
\]
where $\tilde{z}_{j_i}=\tilde{\h}_{j_i}(\ry_{j_i})$ and
\begin{equation}\label{eq:margdens}
f_{\mY_\calI}(\yvec_\calI) = \phi_{\nullvec_I, \mSigma_\calI}(\ztildevec_\calI)\prod_{j_i\in\calI} \frac{\partial
\tilde{\h}_{j_i}(\ry_{j_i})}{\partial \ry_{j_i}}.
\end{equation}
We use a similar process for the conditional distribution of $\mY_\calI$, given $\mY_{\calI^\calC} =
\yvec_{\calI^\calC}$, and note that
\[
 \dsP(\mY_\calI\le\yvec_\calI \mid \mY_{\calI^\calC} = \yvec_{\calI^\calC}) =
 \dsP(\mtildeZ_\calI\le\ztildevec_\calI \mid \mY_{\calI^\calC} = \yvec_{\calI^\calC}) =
 \dsP(\mtildeZ_\calI\le\ztildevec_\calI \mid \mtildeZ_{\calI^\calC} =  \tilde{\zvec}_{\calI^\calC}).
\]
From the rules for multivariate normal distributions we furthermore obtain
\[
 \mtildeZ_\calI \mid \mtildeZ_{\calI^\calC} =  \tilde{\zvec}_{\calI^\calC} \sim\ND_I(\muvec_{\calI \mid \calI^\calC}, \mSigma_{\calI \mid \calI^\calC})
\]
with
\[
 \muvec_{\calI \mid \calI^\calC} = \mSigma_{\calI,\calI^\calC}\mSigma_{\calI^\calC}^{-1} \tilde{\zvec}_{\calI^\calC} \text{ and }  \mSigma_{\calI \mid \calI^\calC} = \mSigma_\calI -
\mSigma_{\calI,\calI^\calC}\mSigma_{\calI^\calC}^{-1}\mSigma_{\calI,\calI^\calC}^\top
\]
where $\mSigma_{\calI^\calC}$ and $\mSigma_{\calI,\calI^\calC}$ denote
the sub-blocks of $\mSigma$ corresponding to the respective index sets.
Therefore, the conditional density of  $\mY_\calI$ given $\mY_{\calI^\calC}$
is
\[
 f_{\mY_{\calI} \mid \mY_{\calI^\calC}}(\yvec_\calI) = \phi_{\muvec_{\calI
\mid \calI^\calC}, \mSigma_{\calI \mid \calI^\calC}}(\ztildevec_\calI)\prod_{j_i\in\calI} \frac{\partial
\tilde{\h}_{j_i}(\ry_{j_i})}{\partial \ry_{j_i}}.
\]

\section{Observed Fisher Information}\label{app:fisher}

Let $\mathcal{F}_i(\thetavec)=-\frac{\partial^2
l_i(\thetavec)}{\partial\thetavec\partial\thetavec^\top}=-\frac{\partial^2
l_i(\thetavec)}{\partial(\parm^\top,\lambdavec^\top)^\top\partial(\parm^\top,\lambdavec^\top)}$. Then the elements of the observed Fisher information are given by
\begin{equation}
\begin{aligned}
\mathcal{F}_i(\lambda_{\tilde k k},\lambda_{\tilde l l})&= -\frac{\partial^2 l_i(\thetavec)}{\partial \lambda_{\tilde k k}\partial\lambda_{\tilde l l}}\\
&=
\begin{cases}
\basisy_k(\ry_{ik})^\top \parm_k \basisy_l(\ry_{il})^\top \parm_l & \mbox{if } \tilde k=\tilde l \mbox{ and } l<\tilde k\\
0 & \mbox{otherwise},
\end{cases}
\end{aligned}
\end{equation}

\begin{equation}
\begin{aligned}
\mathcal{F}_i(\lambda_{\tilde k k},\parm_l)&= -\frac{\partial^2 l_i(\thetavec)}{\partial \lambda_{\tilde k
k}\partial\parm_l^\top}\\
&=
\begin{cases}
\lambda_{\tilde k l}\basisy_l(y_{il})\basisy_k(\ry_{ik})^\top \parm_k &\mbox{if } l<\tilde k, l\neq k\\
 \basisy_{\tilde k}(\ry_{i\tilde k})  \basisy_{k}(\ry_{ik})^\top \parm_k & \mbox{if } \tilde k=l\\
  \left(\sum_{\jmath = 1}^{\tilde k - 1}
  \lambda_{\tilde k\jmath} \basisy_\jmath(\ry_{i\jmath})^\top \parm_\jmath +
  \basisy_{\tilde k}(\ry_{i\tilde k})^\top \parm_{\tilde k} \right) \basisy_k(\ry_{ik})\\
    \quad  +  \lambda_{\tilde k k}\basisy_k(\ry_{ik})\basisy_k(\ry_{ik})^\top \parm_k& \mbox{if } k=l\\
0 & \mbox{otherwise,}
\end{cases}
\end{aligned}
\end{equation}

and

\begin{equation}
\begin{aligned}
\mathcal{F}_i(\parm_{\tilde k},\parm_k)&= -\frac{\partial^2 l_i(\thetavec))}{\partial
\parm_{\tilde k}\partial\parm_k^\top}\\
&=
\begin{cases}
\sum_{j=k}^J\lambda_{jk}^2\basisy_k(\ry_{ik})\basisy_k(\ry_{ik})^\top +
 \frac{\basisy^\prime_k(\ry_{ik})\basisy^\prime_k(\ry_{ik})^\top}{(\basisy^\prime_k(\ry_{ik})^\top \parm_k)^2}&\mbox{if }   k=\tilde k\\
\sum_{j = \tilde k}^J \lambda_{jk}\basisy^\prime_k(\ry_{ik}) \lambda_{j\tilde k}\basisy^\prime_{\tilde k}(\ry_{i\tilde k})^\top
\end{cases}
\end{aligned}
\end{equation}

\section{Fast Alternative to Parametric Boostrap}\label{app:undernutrition}

\NK{As discussed in Sectio~\ref{sec:pb}, a fast alternative to the parametric bootstrap procedure for cases in which $n$ is large is to draw samples $\hat\thetavec_{(b)}$, $b=1,\ldots,B$ from the asymptotic normal distribution of $\thetavec$ from Corollary~\ref{cor2}. We here present the Figure~\ref{fig:undernutrition:both} showing the $95\%$ confidence intervals of the  the conditional rank
correlations $\rho^S$ between \text{stunting}, \text{wasting} and \text{underweight}
as functions of $\text{age}$ of the parametric bootstrap and the fast alternative.}

\begin{figure}[t]
\centering\includegraphics[width=0.99\textwidth,angle=0]{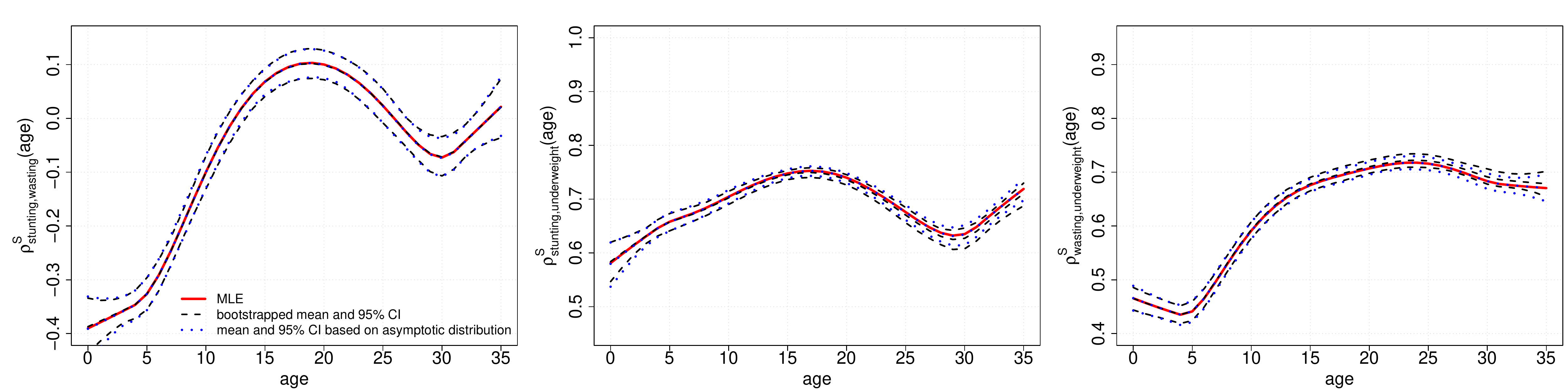}
\caption{Undernutrition.  Spearman's rho
$\rho_{12}^S(\text{age})=\rho^S(y_{\text{\scriptsize{stunting}}},y_{\text{\scriptsize{wasting}}}|\text{age})$,
$\rho_{13}^S(\text{age})=\rho^S(y_{\text{\scriptsize{stunting}}},y_{\text{\scriptsize{underweight}}}|\text{age})$
and
$\rho_{23}^S(\text{age})=\rho^S(y_{\text{\scriptsize{wasting}}},y_{\text{\scriptsize{underweight}}}|\text{age})$.
Shown are the maximum likelihood estimates (solid red line),
the bootstrapped mean estimate and the 95\% bootstrapped
confidence intervals (dashed black lines), as well as mean estimate and 95\% confidence intervals (dashed blue lines) based on the asymptotic distribution of $\thetavec$.} \label{fig:undernutrition:both}
\end{figure}

%
%
%

\small\bibliography{litliste}

\end{document}